\pgfplotsset{compat=1.18} 
\definecolor{darkgreen}{rgb}{0.0, 0.5, 0.0} 
\newtheorem{remark}{Remark}
\newtheorem{theorem}{Theorem}
\newtheorem{corollary}{Corollary}
\newtheorem{definition}{Definition}
\begin{document}
\title{Exact Characterization of Aggregate Flexibility via Generalized Polymatroids}

\author{
    Karan Mukhi,
    Georg Loho,
    and Alessandro Abate%
    \thanks{K. Mukhi and A. Abate are with the Department of Computer Science, University of Oxford, UK}
    \thanks{G. Loho is with the Department of Applied Mathematics, University of Twente, The Netherlands and the Department of Mathematics and Computer Science, FU Berlin, Germany}
}


\maketitle

\begin{abstract}
It is well established that the aggregate flexibility inherent in populations of distributed energy resources (DERs) can be leveraged to mitigate the intermittency and uncertainty associated with renewable generation, while also providing ancillary grid services.
To enable this, aggregators must effectively represent the flexibility in the populations they control to the market or system operator. A key challenge is accurately computing the aggregate flexibility of a population, which can be formally expressed as the Minkowski sum of a collection of polytopes, a problem that is generally computationally intractable.
However, the flexibility polytopes of many DERs exhibit structural symmetries that can be exploited for computational efficiency. To this end, we introduce \textit{generalized polymatroids}, a family of polytopes, into the flexibility aggregation literature.
We demonstrate that individual flexibility sets belong to this family, enabling efficient computation of their \textit{exact} Minkowski sum. 
For homogeneous populations of DERs we further derive simplifications that yield more succinct representations of aggregate flexibility.
Additionally, we develop an efficient optimization framework over these sets and propose a vertex-based disaggregation method, to allocate aggregate flexibility among individual DERs.
Finally, we validate the optimality and computational efficiency of our 
approach through comparisons with existing methods.
\end{abstract}

\section{Introduction}

Power systems are increasingly integrating renewable energy resources to reduce environmental impact.
However, the intermittency and uncertainty of generation from wind and solar poses significant challenges. As the proportion of renewables rises, power system operators must adopt more flexible operational strategies to address these challenges. Meanwhile, with the electrification of heating and transportation, electricity demand is rising significantly \cite{Baruah2014EnergyElectrification}, and uncontrolled consumption from these new loads can strain the grid \cite{Lopes2011IntegrationSystem}. Many of these devices are inherently flexible, meaning there exist a set of consumption profiles that respect the operational constraints of the devices. For example, electric vehicles (EVs) are typically plugged in far longer than is necessary for a complete charge \cite{Lee2019ACN-Data:Dataset}, and thermostatically controlled loads (TCLs) can operate within a dead-band around their set-point temperature \cite{Callaway2009TappingEnergy}.
Distributed energy resources (DERs) encompass a diverse range of such small-scale loads and generators whose energy consumption can be actively managed.
Collectively, ensembles of these devices may offer substantial flexibility that can be used to mitigate the variability and uncertainty associated with renewable energy generation \cite{Almassalkhi2023IntelligentDecarbonization}.

Various control architectures have been proposed to make use of this flexibility, these can generally be categorized into centralized, decentralized and hierarchical schemes \cite{Callaway2011AchievingLoads}.
Fully centralized schemes, where devices report constraints to a system operator that solves a monolithic optimization and issues commands, are globally optimal but impractical and poorly scalable for large DER populations.
Decentralized approaches, such as those proposed in \cite{Tindemans2015DecentralizedResponse} and \cite{Gan2012OptimalCharging}, mitigate scalability challenges. However, they necessitate local computational capabilities in devices and may be unsuitable for real-time decision-making due to latency constraints. More seriously, the lack of a centralized decision-making entity makes it hard to assign accountability for maintaining strict grid reliability guarantees. Hierarchical control schemes can alleviate these issues, whereby an aggregator controls a population of DERs and presents its aggregate flexibility to the system operator \cite{Xu2016HierarchicalLoads}. In this scheme, the aggregator becomes the accountable party that can provide the reliability guarantees to system operators, whilst reducing the complexity for the system operator. An integral part of the function of an aggregator is to represent the aggregate flexibility in the population it controls.  

\subsection{Related Work}
Continuous-time exact representations of flexibility for energy storage systems (ESS) are proposed in \cite{Evans2020AResources, Zachary2021SchedulingStorage}. However,  these representations are ill-suited for the typical discrete-time operation of power systems. Indeed, most work considers the discrete-time variant of the problem, modeling consumption profiles as piecewise-constant functions. In this paradigm, the flexibility of a variety of DERs can be represented as a convex polytope that encode all operational constraints of the device \cite{Trangbaek2011ExactControl, Zhao2017ALoads, Hao2015AggregateLoads}. In some cases, the flexibility sets may be non-convex \cite{Taha2024WhenConvex}, however we will focus on consumption models that generate convex sets in this paper.
The aggregate flexibility of the population can then be characterized exactly as the Minkowski sum of these individual flexibility polytopes. In general, computing the Minkowski sum is difficult \cite{Tiwary2008OnPolytopes}, so most of the literature focuses on approximating the Minkowski sum. Efficient methods of computing outer approximations are proposed in \cite{Barot2017APolytopes, Wen2022AggregateModels}, however by definition these over-approximations include infeasible aggregate consumption profiles and therefore lack the reliability guarantees that are essential in power systems. Accordingly, most of the literature focuses on computing inner approximations of the sum:  many results focus on finding inner approximations of each flexibility set with a base polytope. These base polytopes are selected so that computing the Minkowski sum of a set of them is efficient, \cite{Kundu2018ApproximatingApproach, Muller2019AggregationResources,Alizadeh2014CapturingResponse, Taha2024AnPopulations}. In \cite{Muller2019AggregationResources} zonotopes are used,
whilst in \cite{Zhao2017ALoads} the authors describe a method of defining a base polytope finding inner approximations of the individual flexibility sets with homothets of this base. In \cite{Taha2024AnPopulations} this approach is generalized so that inner approximations may be affine transformations of the base set. Another scheme for deriving inner approximations is by viewing the Minkowski sum as a projection, and computing an inner approximation in the pre-projection space \cite{Zhao2016ExtractingApproximation}. \cite{Nazir2018InnerResources} forms an inner approximation as the union of a set of boxes: this can yield an arbitrarily good approximation for the polytope at the cost of increasing the computational burden. Inner approximations do not contain all feasible aggregate consumption profiles and so one cannot guarantee optimality when optimizing over them, indeed some of the approximations can be very conservative \cite{Ozturk2022AggregationAlgorithms}.
Clearly, only computing exact characterizations can guarantee optimality and feasibility, to this end \cite{Panda2024EfficientVehicles} and \cite{Mukhi2023AnVehicles} focus on a specific, but relevant, case of populations of EVs with charging only capability. These works characterize the individual flexibility sets as permutahedra and use properties of this class of polytopes to perform the Minkowski sum efficiently. The aggregate flexibility sets derived in \cite{Wen2022AggregateModels}, \cite{Mukhi2023AnVehicles}
 and \cite{Panda2024EfficientVehicles} are specific instantiations of the family of polytopes that we study in this paper, and the results in those works can be recovered from the more general theory presented here. 

 Finally, stochastic variants of the problem, where operational constraints of the individual devices are uncertain, have also been studied in \cite{Taheri2022Data-DrivenModels, Zhang2024AUncertainty}. Incorporating uncertainty is beyond the scope of this paper, however results from this paper can be applied to this extended setting \cite{Mukhi2025RobustFlexibility}. 
\subsection{Main Contributions}
 In this context, the contributions of this paper are summarized as follows:
\begin{itemize}
    \item We use generalized polymatroids (g-polymatroids) as a base polytope, and show that the flexibility set of a broad class of DERs can be represented exactly as a g-polymatroids. 
    \item Leveraging properties of g-polymatroids, we derive exact representations of the aggregate flexibility set for a population of DERs.
    \item For a population of  charging-only EVs, we show how these representations may be simplified further.
    \item Applying tools from combinatorial optimization we provide efficient methods of optimizing over these sets.
    \item Finally, we propose a vertex-based method to disaggregate an aggregate consumption profile among devices in the population.
\end{itemize}

The rest of this paper is structured as follows: in \cref{sec:prob_form} we introduce our DER consumption models and formalize the aggregation problem. \cref{sec:aggregation} begins with a brief overview of g-polymatroids and then proceeds to show how individual and aggregate flexibility sets may be represented as g-polymatroids. This is followed with a discussion on simplified representations for homogeneous populations.
In \cref{sec:optimization} we discuss how these sets may be efficiently optimized over, and \cref{sec:disaggregation} provides a vertex based disaggregation method. Numerical studies are presented in \cref{sec:numerical_results}, benchmarking the complexity and optimality of this work against competing methods. Finally, conclusions are drawn in \cref{sec:conc}.

 \subsection*{Notation}\label{subsection:notation}
 \noindent
 For a vector $ u \in \mathbb{R}^{\mathcal{T}} $, where $ \mathcal{T} \subseteq \mathbb{N} $ is a finite index set and $ t \in \mathcal{T} $, we denote by $ u(t) $ the $ t^{\text{th}} $ component of $ u $.
 For any subset $ A \subseteq \mathcal{T} $,  we let $A' := \mathcal{T}\setminus A$.
 We also define $ u(A) := \sum_{t \in A} u(t) $.
 In particular, for $ t \in \mathcal{T} $, we write $ [t] := \{1, 2, \ldots, t\} $.
 Lastly, we use the notation $ \sum(\cdot) $ to denote both standard summation and Minkowski summation, with the specific meaning determined by context.

\section{Problem Formulation}\label{sec:prob_form}
In this section we introduce our power consumption model for a DER and formalise our notion of \textit{flexibility}, both in the context of individual devices and for aggregations of devices.
We introduce three problems that we aim to solve in this paper: aggregation, optimisation and disaggregation. 
Finally, we discuss the expressivity and the limitations of the model we introduce.

In the following, we consider an aggregator that has direct control over the power consumption of a finite population of devices, indexed by $i \in \mathcal{N}$, where $\mathcal{N} := \{1, \ldots, N\}$. 
We consider this problem over a finite time horizon, we discretize this horizon into $T$ time steps each of duration $\delta$. We denote the set of time steps as $\mathcal{T} := \{1, \ldots, T\}$, and let $t \in \mathcal{T}$ index a specific interval. 

\subsection{DER Flexibility Sets}\label{subsect:der_model}
Let $u_i(t)$ denote the consumption rate of the $i^{th}$ DER, in time step $t \in \mathcal{T}$. Each time step is assumed to be of equal length, denoted $\delta > 0$.  By convention, $u_i(t)$ denotes the net power consumption of the device, i.e. $u_i(t) < 0$ indicates that the device is generating power, and $u_i(t) > 0$ signifies the device is consuming power. The DER's power consumption is assumed to be constant within each time step. The vector $u_i \in \mathbb{R}^{\mathcal{T}}$ denotes the \textit{consumption profile} of the DER over the entire time horizon.

Each DER will have a (possibly time-dependent) lower and upper limit on its power consumption, denoted by $\underline{u}_i$ and $\overline{u}_i \in \mathbb{R}^\mathcal{T},$ such that its power consumption must stay within this interval in each time step:
\begin{equation}
    \underline{u}_i(t) \leq u_i(t) \leq \overline{u}_i(t) \quad \forall t \in \mathcal{T}.
\end{equation}
Next, we let $x_i \in \mathbb{R}^{\mathcal{T}}$ denote the \textit{state of charge} (SoC) of the DER, such that $x_i(t)$ is the state of charge at the end of timestep $t$:
\begin{equation}\label{eq:soc_dynamics}
    x_i(t) = x_i(t-1) + u_i(t) \delta.
\end{equation}
Without loss of generality, we assume $\delta=1$. By convention, and also without loss of generality, we assume the initial state of charge of the DER at $t=0$ is $x_i(0) = 0$, and so we can write \eqref{eq:soc_dynamics} as:
\begin{equation}
    x_i(t) = u_i([t])
\end{equation}
where $u_i([t])$ denotes the sum of the first $t$ elements of $u_i$, as introduced in the notation section.
Similarly to its power constraints, the DER will also have a (again possibly time-dependent) lower and upper limit on the SoC, denoted by $\underline{x}_i$ and $\overline{x}_i \in \mathbb{R}^{\mathcal{T}}$ respectively, such that:
\begin{equation}\label{eq:energy_constraints}
    \underline{x}_i(t) \leq u_i([t]) \leq \overline{x}_i(t) \quad \forall t \in \mathcal{T}.
\end{equation}
For ease of notation we collect all the parameters relating to the DER power consumption requirements into the tuple 
$\xi_i = (\underline{u}_i, \overline{u}_i, \underline{x}_i, \overline{x}_i)$.

\begin{definition}\label{dfn:individual_flexibility_sets}
    For a DER with consumption requirements $\xi_i$, the \emph{individual flexibility set} of the device, denoted $\mathcal{F}(\xi_i)$, is the set of all feasible consumption profiles for the DER:
\begin{equation*}
    \mathcal{F}(\xi_i) := \left\{ u \in \mathbb{R}^{\mathcal{T}} \; \middle\vert \;
   \begin{array}{@{}cl}
                    \underline{u}_i(t) \leq u(t)  \leq \overline{u}_i(t) \;\; \forall t \in \mathcal{T}\\
                    \underline{x}_i(t) \leq \sum_{s=1}^t u(s) \leq \overline{x}_i(t) \;\; \forall t \in \mathcal{T}
   \end{array} 
   \right\}. 
\end{equation*}
\end{definition}
The individual flexibility set, $\mathcal{F}(\xi_i)$, is defined by a set of linear constraints.  The first set of these constraints is clearly bounded, hence the individual flexibility sets are all compact, convex polytopes.

\subsection{Aggregated Flexibility Sets}
We now consider an aggregator controlling a population of DERs, each characterized by its own consumption parameter. We let $\Xi_\mathcal{N} = \{\xi_i\}_{i \in \mathcal{N}}$ denote the multiset of consumption parameters for all DERs in the population. 
The \textit{aggregate consumption profile}, denoted $u_\mathcal{N}$, is the sum of the individual consumption profiles of devices in the population:
\begin{equation}
    u_\mathcal{N} = \sum_{i \in \mathcal{N}} u_i.
\end{equation}
\begin{definition}\label{dfn:aggregate_flexibility_set}
    The \emph{aggregate flexibility set} of a population of DERs  with consumption parameters $\Xi_\mathcal{N}$ is the set of all feasible aggregate consumption profiles of the population:
\end{definition}
\begin{equation*}
    \mathcal{F}(\Xi_\mathcal{N}) := \left\{ u_\mathcal{N} \in \mathbb{R}^{\mathcal{T}} \middle| u_\mathcal{N} = \sum_{i \in \mathcal{N}} u_i,\;\; u_i \in \mathcal{F}(\xi_i) \; \forall i \in \mathcal{N} \right\}. 
\end{equation*}
By definition, the aggregate flexibility set $ \mathcal{F}(\Xi_\mathcal{N})$, is the Minkowski sum of the individual flexibility sets of the DERs in the population:
\begin{equation}
    \mathcal{F}(\Xi_\mathcal{N})  = \sum_{i \in \mathcal{N}} \mathcal{F}(\xi_i).
\end{equation} 

\begin{remark}
One can compute the Minkowski sum of a collection of polytopes by considering the common refinement of their normal fans \cite[Proposition 7.12]{Ziegler2012LecturesPolytopes}.
By summing each of the polytope’s support functions of the rays of the refined normal fan, one obtains the facet description of the sum, and by summing the support functions of the full-dimensional cones, one recovers the vertex description.
Forming the common refinement of several normal fans can become quite expensive, especially as the dimension or the number of summands grows. In essence, one must take each cone from every input fan and intersect it with all cones coming from the other fans. In practical terms, this means that in higher dimensions—or when combining many polytopes—enumerating all intersections tends to blow up combinatorially. Because our aim is to compute the Minkowski sum of a large family of such high-dimensional polytopes, forming the common refinement is generally infeasible. However, we will show that the normal fans of the flexibility sets in question are all coarsenings of a single fan, so the normal fan of their Minkowski sum is exactly that shared fan.
\end{remark}

\subsection{Optimization and Disaggregation}
With a characterization of the aggregate flexibility set $\mathcal{F}(\Xi_\mathcal{N}) $, we consider aggregators that would like to find certain optimal aggregate consumption profiles, solving problems of the form:
\begin{equation}
    \textrm{minimize} \;\; f(u) \quad s.t \;\;u \in \mathcal{F}(\Xi_{\mathcal{N}}).
\end{equation}
As we shall see in the next section, the characterization of the aggregate flexibility set, though exact, is complex. In general, $\mathcal{F}(\Xi_\mathcal{N}) $ is a polytope characterized by at most $2^T$ facets and up to $T!$ vertices. Optimizing over this characterization is not trivial, and so we shall present methods to make this optimization tractable.

Finally, given an optimal aggregate consumption profile $u_\mathcal{N}^* \in \mathcal{F}(\Xi_\mathcal{N})  $, we seek to disaggregate this profile among the DERs in the population. This involves determining individual consumption profiles that are feasible for each device while ensuring their aggregate consumption matches the optimal aggregate consumption profile, i.e. computing $u_i^*$ such that:
\begin{equation}\label{eq:disaggregation}
  u_\mathcal{N}^* = \sum_{i=1}^N u_i^* \quad s.t \;\; u_i^* \in \; \mathcal{F}(\xi_i) \;\; \forall i \in \mathcal{N}.
\end{equation}

\subsection{Expressivity of the Model}\label{subsect:expressivity}
The DER consumption model introduced in Section \ref{subsect:der_model} is versatile, capable of representing the flexibility in a variety of devices, including EVs \cite{Taha2024AnPopulations}, storage systems, distributed generation and the slower dynamics of TCL consumption \cite{Xu2016HierarchicalLoads}. This section details the methodology for specifying the model parameters, denoted by $\xi_i$. 
For brevity, the discussion is limited to EVs; however, the approach can be readily extended to other classes of devices.

We assume that each vehicle arrives at the charging station at the beginning of time step $a_i$ and departs at the end of time step $d_i$, where $a_i, d_i \in \mathcal{T}$,  within the defined time horizon. Accordingly, we define the charging interval for vehicle $i$ as $C_i := \{a_i, a_i +1,...,d_i\} \subseteq \mathcal{T}$
At all time steps during this period, the EV may consume energy between its minimum and maximum power capacity $\underline{m}_i$ and $\overline{m}_i$, whereas for all other time steps the power consumption must vanish. Accordingly, we establish the values of $\underline{u}_i(t)$ and $\overline{u}_i(t)$:
\begin{equation*}
    \begin{array}{cc}
        \underline{u}_i(t) = 
        \begin{cases}
            0                   & t \notin C_i  \\
            \underline{m}_i     & t \in C_i,\\
        \end{cases}
        & 
        \overline{u}_i(t) = 
        \begin{cases}
            0                   & t \notin C_i \\
            \overline{m}_i     &  t \in C_i.\\
        \end{cases}
    \end{array}
\end{equation*}
This model permits discharging and so $\underline{m}_i$ may take negative values. To model EVs with no discharging capabilities we simply set $\underline{m}_i = 0$ \cite{Panda2024EfficientVehicles}, \cite{Mukhi2023AnVehicles}.
Next, we consider the constraints on the state of charge of the EV. We assume each EV has a limited energy storage capacity, $\overline{x}_i \in \mathbb{R}_+$. The EV arrives with an initial state of charge $e^0_{i}$, and must have a final state of charge in the interval $[\underline{e}_i, \overline{e}_i]$ at the time it departs, from which we determine the values for $\underline{x}_i(t)$ and $\overline{x}_i(t)$ \cite{Taha2024AnPopulations} \cite{Hao2014CharacterizingLoads} 
\begin{equation*}
    \begin{array}{cc}
        \underline{x}_i(t) = 
        \begin{cases}
            -e^0_{i}                  & t < d_i\\
            \underline{e}_i -e^0_{i}  & t \geq d_i,
        \end{cases} 
        & 
        \overline{x}_i(t) = 
        \begin{cases}
            \overline{x}_i -e^0_{i}   & t < d_i\\
            \overline{e}_i -e^0_{i}   & t \geq d_i.
        \end{cases} 
    \end{array}
\end{equation*}

An energy storage system is essentially an EV, that is available for the entire time horizon, i.e. $\underline{u}(t) = \underline{m}$ and  $\overline{u}(t) = \overline{m}$, $\forall t$. Furthermore, there are no constraints on the final state of charge other than respecting the device's storage constraints. Assuming the ESS has an initial SoC $e^0_{i}$, we have $\underline{x}_i(t) = -e^0_{i}$ and $\overline{x}_i(t) = \overline{x}_i - e^0_{i}$, $\forall t$.

For distributed generation the consumption will be bounded from above by $\overline{u}_i(t) = 0 \; \forall t$, i.e. when the device is fully curtailed, and from below by $\underline{u}_i(t)$, its time-dependent maximum power output. Generation systems lack a SoC, and so to effectively disregard SoC constraints, we set  $\underline{x}_i(t) = -\infty$ and $\overline{x}_i(t) = \infty \;\; \forall t$.

\subsection{Limitations}
Whilst this model is expressive enough to describe  various different classes of devices, there are some limitations. Firstly, this model assumes perfect round-trip-efficiency for battery charging and discharging. The state of charge dynamics for batteries from \eqref{eq:soc_dynamics} can more generally be written as
\begin{equation*}
    x_i(t) =  x_i(t-1) + \eta_i^+ u_i^+(t) + \frac{1}{\eta_i^-} u_i^-(t), 
\end{equation*}
where $\eta_i^+$ and $\eta_i^-$ denote the device's charging and discharging efficiencies, and $ u_i^+(t) := \textrm{max}\{0, u_i(t)\}$ and $u_i^-(t) := \textrm{min}\{0, u_i(t)\}$ separate the consumption profile into the charging and discharging components. Whilst our model implicitly assumes $\eta_i^+ = \eta_i^- = 1$, $\eta_i^+$ and $\eta_i^-$ will typically lie in the interval $(0, 1]$ depending on the characteristics of the battery. However, if the devices are restricted to consumption only, i.e. $\underline{u}_i(t) \geq 0$ for all $t$, we may absorb $\eta_i^+$ into $u_i^+(t)$,  and the model remains faithful.

Secondly, we assume lossless charging, i.e. devices perfectly hold their state of charge. The state dynamics from \eqref{eq:soc_dynamics} for general lossy devices can instead be written as 
\begin{equation}
    x_i(t) = \lambda_i x_i(t-1) + u_i(t), 
\end{equation}
with $\lambda_i \in [0,1]$, where $\lambda_i = 1$ corresponds to lossless charging. While this is generally not a concern when modeling storage devices, since battery based systems such as EVs and ESSs typically retain their charge effectively, the consumption dynamics of TCLs are inherently characterized by lossy storage. Nevertheless, g-polymatroid inner or outer approximations can still be constructed when this assumption is relaxed, as shown in \cite{Mukhi2025AggregatePolymatroids}.

Other methods in the literature do not impose these assumptions \cite{Taha2024AnPopulations, Muller2019AggregationResources}, however they yield approximations of the aggregate flexibility.
It is precisely the enforcement of these assumptions, namely, perfectly efficient charging and discharging ($\eta_i^+ = \eta_i^- = 1$) and lossless storage ($\lambda_i = 1$), that gives rise to the structure of the flexibility sets that allows them to be characterized as g-polymatroids. These structural properties are elaborated in Remark~\ref{rem:genpolyedges}.

\section{Aggregation}\label{sec:aggregation}
In this section we provide an exact characterisation of the aggregate flexibility sets defined in the previous section. Our strategy is to show that the individual flexibility sets are g-polymatroids, a family of polytopes that can be represented with a pair of super- and submodular functions. We are able to leverage properties of g-polymatroids to provide exact representations of the aggregate flexibility. In general, these functions will be complex, so we end this section with an example of how they may be simplified, and link this back to results in the literature. 

\subsection{Generalized Polymatroids}
For the sake of completeness we provide a brief overview of g-polymatroids. We refer the reader to \cite{Frank2011ConnectionsOptimization} for a more detailed treatment of the subject.

\begin{definition}\label{dfn:submodular}
    A \emph{submodular function} $b: 2^\mathcal{T} \rightarrow \mathbb{R}$ is a set function defined over subsets of a finite set $\mathcal{T}$ that satisfies the property:
    \begin{equation}\label{eq:submodular}
        b(A) + b(B) \geq b(A \cap B) + b(A \cup B)
    \end{equation}
    for all subsets $A, B \subseteq \mathcal{T}$.
\end{definition}
One can also define a \emph{supermodular function}, $p$, by reversing the inequality from \eqref{eq:submodular}, or as the negative of a submodular functions, i.e. $p = -b$ is supermodular if $b$ is submodular.

\begin{definition}
    The \emph{submodular polyhedron}, $ \mathcal{P}(b) \subseteq \mathbb{R}^{\mathcal{T}}$, associated with the submodular function $b$ is defined as:
\begin{equation*}
     \mathcal{P}(b) = \left\{ u \in  \mathbb{R}^{\mathcal{T}} \mid u(A) \leq b(A) \;\; \forall A \subseteq \mathcal{T}\right\}.
\end{equation*}
\end{definition}
Similarly we define a \emph{supermodular polyhedron} associated with the supermodular function $p$ as:
\begin{equation*}
     \mathcal{P}'(p) = \left\{ u \in  \mathbb{R}^{\mathcal{T}} \mid   p(A) \leq u(A)\;\; \forall A \subseteq \mathcal{T}\right\}.
\end{equation*}
Note how $ \mathcal{P}(b)$ and $ \mathcal{P}'(p)$ are defined by a set of $2^{|\mathcal{T}|}$ hyperplanes, one for each of the $2^{|\mathcal{T}|}$ subsets of $\mathcal{T}$.
 
\begin{definition}
    The pair $(p,b)$ is said to be \emph{paramodular} if $p$ is supermodular, $b$ is submodular, $p(\emptyset) = b(\emptyset) = 0$ and the \emph{cross-inequality}:
    \begin{equation}
        b(A) - p(B) \geq b(A\setminus B) - p(B \setminus A)
    \end{equation} 
    holds for all $A, B \subseteq \mathcal{T}$.
\end{definition}
The cross inequality is equivalent to ensuring the base polyhedron of $p$ is contained within the submodular polyhedron of $b$, and vice-versa.

\begin{definition}\label{dfn:g_polymatroid}
    For a paramodular pair $(p,b)$ we define the \emph{generalized polymatroid (g-polymatroid)}, denoted $\mathcal{Q}(p,b)$ as the polytope:
    \begin{equation}
        \mathcal{Q}(p,b) = \left\{ u \in  \mathbb{R}^{\mathcal{T}} \mid p(A) \leq u(A) \leq b(A) \;\; \forall A \subseteq \mathcal{T} \right\}.
    \end{equation}
\end{definition}
For any paramodular pair, the g-polymatroid defined by the pair is non-empty.
Essentially, a g-polymatroid is the intersection of the supermodular and submodular polyhedra associated with the paramodular pair $(p,b)$, as illustrated in Fig. \ref{fig:sub_super_polyhedra}. G-polymatroids comprise a rich class of polytopes that have been studied in the context of combinatorial optimization. This class of polytopes subsumes many other classes of common polytopes, including cubes, simplexes and permutahedra \cite{Postnikov2009PermutohedraBeyond}. 

\begin{figure}
    \centering
    \begin{tikzpicture}

    \filldraw[opacity=0.3, red] (1,3) -- (3,1) -- (6,1) -- (6,6) -- (1,6) -- cycle;
    \filldraw[opacity=0.3, blue] (0,0) -- (4.5,0) -- (4.5,3.5) -- (3.5,4.5) -- (0,4.5) -- cycle;
    \draw[dashed, line width=0.1mm, black] (1,3) -- (3,1);
    \draw[dashed, line width=0.1mm, black] (1,3) -- (1,4.5);
    \draw[dashed, line width=0.1mm, black] (4.5,3.5) -- (3.5,4.5);
    
    \draw[dashed, line width=0.1mm, black] (1,4.5) -- (3.5,4.5);
    \draw[dashed, line width=0.1mm, black] (4.5,3.5) -- (4.5,1);
    \draw[dashed, line width=0.1mm, black] (4.5,1) -- (3,1);
    
    \coordinate (A) at (1,3);
    \coordinate (B) at (3,1);
    \coordinate (Q) at (4.5,3.5);
    \coordinate (R) at (3.5,4.5);
    \node at (1.2,1.2) {$ \mathcal{P}(b)$};
    \node at (4.7,4.7) {$ \mathcal{P}'(p)$};
    \node at (3,3) {$\mathcal{Q}(p, b)$};
\end{tikzpicture}
    \caption{Supermodular and submodular polyhedra, $ \mathcal{P}'(p)$ and $ \mathcal{P}(b)$, their intersection is the g-polymatroid $\mathcal{Q}(p,b)$.}
    \label{fig:sub_super_polyhedra}
\end{figure}
\begin{theorem}[Sum Theorem]\label{thm:g_polymatroid_sum}\cite[Theorem 14.2.15]{Frank2011ConnectionsOptimization}
    \newline
    The Minkowski sum of a set of g-polymatroids is given by 
    \begin{equation}
        \sum_i \mathcal{Q}(p_i, b_i) = \mathcal{Q}\left(\sum_i p_i, \sum_i b_i\right). 
    \end{equation}
\end{theorem}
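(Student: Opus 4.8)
The plan is to reduce to two summands and prove the two inclusions separately, with the reverse inclusion carrying all the weight. By associativity it suffices to establish $\mathcal{Q}(p_1,b_1)+\mathcal{Q}(p_2,b_2)=\mathcal{Q}(p_1+p_2,b_1+b_2)$ and then induct on the number of summands. First I would check that the right-hand side is genuinely a g-polymatroid: a sum of supermodular (resp.\ submodular) functions is supermodular (resp.\ submodular), $(p_1+p_2)(\emptyset)=(b_1+b_2)(\emptyset)=0$, and adding the cross-inequalities of $(p_1,b_1)$ and $(p_2,b_2)$ gives the cross-inequality of the sum, so the pair is paramodular. The inclusion ``$\subseteq$'' is then a one-line verification: if $u=u_1+u_2$ with $u_j\in\mathcal{Q}(p_j,b_j)$, then $(p_1+p_2)(A)\le u_1(A)+u_2(A)=u(A)\le(b_1+b_2)(A)$ for every $A\subseteq\mathcal{T}$.

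\textbf{The reverse inclusion.} Given $u\in\mathcal{Q}(p_1+p_2,b_1+b_2)$ I must produce a splitting $u=u_1+u_2$ with $u_j\in\mathcal{Q}(p_j,b_j)$. The lever is that the class of g-polymatroids is closed under the reflection--translation $x\mapsto u-x$: writing $u$ also for the modular set function $A\mapsto\sum_{t\in A}u(t)$, one verifies that $(u-b_2,\,u-p_2)$ is paramodular and that $u-\mathcal{Q}(p_2,b_2)=\mathcal{Q}(u-b_2,\,u-p_2)$. Since $u_1$ is a valid first part exactly when $u_1\in\mathcal{Q}(p_1,b_1)$ and $u-u_1\in\mathcal{Q}(p_2,b_2)$, a splitting exists if and only if $\mathcal{Q}(p_1,b_1)\cap\mathcal{Q}(u-b_2,\,u-p_2)\neq\emptyset$. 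To close this I would invoke the g-polymatroid \emph{intersection theorem}: for paramodular pairs, $\mathcal{Q}(p,b)\cap\mathcal{Q}(p',b')\neq\emptyset$ if and only if $p\le b'$ and $p'\le b$ pointwise on $2^\mathcal{T}$. With $(p,b)=(p_1,b_1)$ and $(p',b')=(u-b_2,u-p_2)$ these conditions unwind to $p_1(A)\le u(A)-p_2(A)$ and $u(A)-b_2(A)\le b_1(A)$ for all $A$, i.e.\ $(p_1+p_2)(A)\le u(A)\le(b_1+b_2)(A)$ --- precisely the hypothesis $u\in\mathcal{Q}(p_1+p_2,b_1+b_2)$. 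Any point $x$ in the intersection then yields the splitting $u_1:=x$, $u_2:=u-x$.

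\textbf{Main obstacle.} The argument reroutes the difficulty entirely into the intersection theorem, which is of the same depth as the theorem being proved. A self-contained derivation would prove box-total-dual-integrality of the system $\{x: p_i(A)\le x(A)\le b_i(A),\ A\subseteq\mathcal{T},\ i=1,2\}$ and then, via LP duality, uncross a hypothetical infeasibility certificate down to a single violating set $A$ using the sub/supermodular exchange inequalities; alternatively one inducts on $|\mathcal{T}|$ using that restrictions and contractions of g-polymatroids remain g-polymatroids. For the paper I would simply cite \cite[Theorem 14.2.15]{Frank2011ConnectionsOptimization}. I note a route that avoids the intersection theorem altogether: since ``$\subseteq$'' already holds and the Minkowski sum is convex, it suffices to write each \emph{vertex} of $\mathcal{Q}(p_1+p_2,b_1+b_2)$ as a sum of vertices of the summands; vertices of a g-polymatroid arise greedily along a maximal chain $\emptyset\subset\cdots\subset\mathcal{T}$, and splitting each tight constraint $v(S_j)=(p_1+p_2)(S_j)$ or $(b_1+b_2)(S_j)$ coordinatewise produces candidates whose remaining slack constraints are verified by super/submodularity --- this is the normal-fan computation of the earlier remark, exploiting that every $\mathcal{Q}(p_i,b_i)$ coarsens the chain fan.
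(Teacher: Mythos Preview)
The paper does not prove this statement at all: Theorem~\ref{thm:g_polymatroid_sum} is stated with a citation to \cite[Theorem 14.2.15]{Frank2011ConnectionsOptimization} and no proof is given. Your own proposal already anticipates this (``For the paper I would simply cite \cite[Theorem 14.2.15]{Frank2011ConnectionsOptimization}''), so in that sense you and the paper agree.

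Your sketch goes considerably further than the paper and is mathematically sound. The reflection $x\mapsto u-x$ does send $\mathcal{Q}(p_2,b_2)$ to $\mathcal{Q}(u-b_2,u-p_2)$ (modularity of $u$ makes the cross-inequality survive), and the reduction to the g-polymatroid intersection theorem is the standard route in Frank's book; you are right that this merely relocates the difficulty rather than dissolving it. The alternative vertex-splitting argument you outline at the end is in fact closer in spirit to how the paper \emph{uses} the result: the discussion around Remark~\ref{rem:genpolyedges}, the greedy algorithm in Section~\ref{subsection:greedy}, and Corollary~\ref{thm:fukuda} all exploit that every summand's normal fan coarsens the braid fan, so vertices of the sum decompose along a common chain. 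If you wanted a proof that meshes with the paper's internal logic rather than a black-box citation, that second route would be the more natural choice.
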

\begin{remark}
    This result implies that the family of g-polymatroids forms a convex cone under Minkowski addition and non-negative scalar multiplication \cite{Edmonds2003SubmodularPolyhedra}. Adding g-polymatroids together or scaling them by a positive factor preserves the paramodularity of the super- and submodular functions that generate them, and hence results in another g-polymatroid. This property is particularly useful for aggregating the flexibility sets of multiple devices, as we will see in the following subsections.
\end{remark}

\begin{figure*}[t]
    \centering
    \begin{tikzpicture}
        \node at (0, -2) [left] {};
    \end{tikzpicture}
    \begin{subfigure}[b]{0.3\textwidth}
        \centering
            \begin{tikzpicture}

        \draw[->] (-.5,0) -- (4.5,0);
        \draw[dashed, opacity=0.4] (0,1) -- (4.5,1);
        \node at (0, 1) [left] {$\overline{x}$};
        \draw[->] (0,-2) -- (0,2.2);


        \foreach \x in {1,2,3,4}
            \draw (\x,0.1) -- (\x,-0.1);
        
        \foreach \x in {1,2,3,4}
            \draw (\x - 0.5 ,0) -- (\x-0.5,-0) node[below, font=\tiny] {\x};

        \foreach \x in {1}
            \fill[blue, opacity=0.2] (\x-1,-2) rectangle (\x,2);
        \draw[-, blue] (0, 0) -- (1, 1);\draw[-, blue] (1, 1) -- (2, 1);\draw[-, blue] (2, 1) -- (3, 1);\draw[-, blue] (3, 1) -- (4, 1);
        \draw[blue, dashed] (0, 0) -- (1, 1);\draw[blue, dashed] (1, 1) -- (4, -2);\draw[blue, dashed] (2, 1) -- (3, 1);\draw[blue, dashed] (3, 1) -- (4, 1);
            
    \end{tikzpicture}

    
        \label{fig:sub1}
        \caption*{$A=\{1\}$}
    \end{subfigure}
    \hfill
    \begin{subfigure}[b]{0.3\textwidth}
        \centering
            \begin{tikzpicture}

        \draw[->] (-.5,0) -- (4.5,0);
        \draw[dashed, opacity=0.4] (0,1) -- (4.5,1);
        \draw[->] (0,-2) -- (0,2.2);

        \foreach \x in {1,2,3,4}
            \draw (\x,0.1) -- (\x,-0.1);
        
        \foreach \x in {1,2,3,4}
            \draw (\x - 0.5 ,0) -- (\x-0.5,-0)  node[below, font=\tiny] {\x};

        \foreach \x in {1, 2}
            \fill[blue, opacity=0.2] (\x-1,-2) rectangle (\x,2);
        \draw[-, blue] (0, 0) -- (1, 1);\draw[-, blue] (1, 1) -- (2, 1);\draw[-, blue] (2, 1) -- (3, 1);\draw[-, blue] (3, 1) -- (4, 1);
        
        \draw[blue, -] (0, 0) -- (1, 1);\draw[blue, -] (1, 1) -- (2, 1);\draw[blue, dashed] (2, 1) -- (3, 0);\draw[blue, dashed] (3, 0) -- (4, -1);
            
    \end{tikzpicture}

    
        \label{fig:sub2}
        \caption*{$A=\{1,2\}$}

    \end{subfigure}
    \hfill
    \begin{subfigure}[b]{0.3\textwidth}
        \centering
            \begin{tikzpicture}

        \draw[->] (-.5,0) -- (4.5,0) node[right, font=\small] {$t$};
        \draw[dashed, opacity=0.4] (0,1) -- (4.5,1);
        \draw[->] (0,-2) -- (0,2.2);


        \foreach \x in {1,2,3,4}
            \draw (\x,0.1) -- (\x,-0.1);
        
        \foreach \x in {1,2,3,4}
            \draw (\x - 0.5 ,0) -- (\x-0.5,-0) node[below, font=\tiny] {\x};

        \foreach \x in {1, 3}
            \fill[blue, opacity=0.2] (\x-1,-2) rectangle (\x,2);
        \draw[blue, -] (0, 0) -- (1, 1);\draw[blue] (1, 1) -- (2, 1);\draw[blue, -] (2, 1) -- (3, 2);\draw[blue] (3, 2) -- (4, 2);\node at (4, 2) [right, font=\small] {$b_T(A)$};
        \draw[dashed, blue] (0, 0) -- (1, 1);\draw[dashed, blue] (1, 1) -- (2, 0);\draw[dashed, blue] (2, 0) -- (3, 1);\draw[dashed, blue] (3, 1) -- (4, 0);\node at(4, 0) [above right, font=\small] {$x(t)$};

    \end{tikzpicture}

    
        \label{fig:sub3}
        \caption*{$A=\{1,3\}$}
    \end{subfigure}

\caption{The submodular function $b^T$ for various subsets $A \subseteq \mathcal{T}$. The black dashed line represent the energy limits $\overline{x}$. The blue dashed line represents the consumption profile $x(t)$ that maximizes its consumption over time steps in $A$.
    The solid blue lines represent the cumulative energy consumption for the device during time steps in $A$ following consumption profile $x(t)$, and $b^T(A)$ is the maximum cumulative consumption during time steps in $A$ at the end of the time horizon.}
    \label{fig:paramodular}
\end{figure*}
\subsection{Individual Flexibility Sets as Generalized Polymatroids}
For clarity of notation, in this subsection we shall drop the subscript $i$ over elements of the population, and construct the flexibility sets for a device with charging parameters $\xi = (\underline{u}, \overline{u}, \underline{x}, \overline{x})$.
To leverage Theorem~\ref{thm:g_polymatroid_sum}, it is first necessary to establish that the individual flexibility sets defined in Definition~\ref{dfn:individual_flexibility_sets} are indeed g-polymatroids. 
The following new Theorem makes this explicit:

\begin{theorem}
    \label{lem:individual_flexibility_sets_g_polymatroid}
    $\mathcal{F}(\xi)$ is the g-polymatroid $\mathcal{Q}(p_T, b_T)$, where
    $p_0(A) := \underline{u}(A)$ and $b_0(A) := \overline{u}(A)$ and 
\begin{subequations}\label{eq:individual_param}
    \begin{IEEEeqnarray}{rCl}
        p_{s+1}(A) & = & \max\Bigl\{ p_s(A \cap S), \;\;\underline{x}(s+1) - b_s(A'\cap S)\Bigr\}
    \nonumber\\
    & & \hphantom{\max\Bigl\{} + \underline{u}(A \cap S')
    \label{eq:individual_super}
    \\[2pt]
    b_{s+1}(A) & = & \min\Bigl\{ b_s(A \cap S), \;\;\overline{x}(s+1) - p_s(A'\cap S)\Bigr\}
    \nonumber\\
    & & \hphantom{\min\Bigl\{} + \overline{u}(A \cap S')
    \label{eq:individual_sub}
    \end{IEEEeqnarray}
\end{subequations}
    for $s = 0,\dots,T-1$, where $S:=\{1,...,s+1\}$.
\end{theorem}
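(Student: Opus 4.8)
The plan is to prove the statement by induction on $s$, showing that at each stage $s$ the polytope
\[
    \mathcal{Q}(p_s, b_s) \;=\; \Bigl\{ u \in \mathbb{R}^{\mathcal{T}} \;\Big|\; p_s(A) \le u(A) \le b_s(A)\ \forall A \subseteq \mathcal{T} \Bigr\}
\]
coincides with the set of consumption profiles satisfying the box constraints $\underline{u}(t) \le u(t) \le \overline{u}(t)$ for all $t$, together with the SoC constraints $\underline{x}(r) \le u([r]) \le \overline{x}(r)$ for the first $s$ time steps only, i.e.\ for $r \in [s]$. The base case $s=0$ is immediate: $p_0(A) = \underline{u}(A) = \sum_{t \in A}\underline{u}(t)$ and $b_0(A) = \overline{u}(A)$ encode exactly the box constraints (with no SoC constraints imposed), and one checks that $(p_0, b_0)$ is paramodular since both are modular (additive) set functions. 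The final case $s = T$ then gives $\mathcal{Q}(p_T, b_T) = \mathcal{F}(\xi)$ by Definition~\ref{dfn:individual_flexibility_sets}.

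For the inductive step, assume $\mathcal{Q}(p_s, b_s)$ is the flexibility set with SoC constraints imposed on $[s]$. I need to show that intersecting this g-polymatroid with the two new half-spaces $\underline{x}(s+1) \le u(S) \le \overline{x}(s+1)$, where $S = [s+1]$, yields precisely $\mathcal{Q}(p_{s+1}, b_{s+1})$ with $p_{s+1}, b_{s+1}$ as defined in \eqref{eq:individual_param}. The key structural fact I would invoke is the behaviour of g-polymatroids under intersection with a single ``interval'' constraint on a set $S$: restricting $\mathcal{Q}(p,b)$ by $\alpha \le u(S) \le \beta$ produces another g-polymatroid, whose defining paramodular pair is obtained by a known contraction/restriction formula. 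Concretely, for a subset $A$, split it as $A = (A \cap S) \cup (A \cap S')$; the new upper bound on $u(A)$ is the best bound derivable from (i) the old bound $b_s(A)$ itself, and (ii) the combination $u(A \cap S) \le u(S) - u(S \setminus A) \le \overline{x}(s+1) - p_s(A' \cap S)$, where the last step uses the lower bound on $u$ over the complementary part of $S$; adding back the modular contribution $\overline{u}(A \cap S')$ over the part of $A$ outside $S$ separates cleanly because $S' = \{s+2,\dots,T\}$ is untouched by the new constraint. This is exactly the recursion \eqref{eq:individual_sub}, and symmetrically for \eqref{eq:individual_super}. I would write $b_s(A \cap S)$ rather than $b_s(A)$ inside the $\min$ because the contribution over $S'$ has been pulled out as a separate modular term — and here one must be a little careful, using that $b_s$ decomposes as $b_s(A) = b_s(A \cap S) + \overline{u}(A \cap S')$, which holds because $p_s, b_s$ only ever constrain sums over subsets of $[s] \subseteq S$ nontrivially and act modularly on coordinates $> s$; this sub-claim should itself be carried along in the induction.

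The two things requiring genuine care, and which I expect to be the main obstacles, are: (1) verifying that the pair $(p_{s+1}, b_{s+1})$ is again \emph{paramodular} — supermodularity of $p_{s+1}$, submodularity of $b_{s+1}$, the normalization $p_{s+1}(\emptyset) = b_{s+1}(\emptyset) = 0$, and the cross-inequality — since a $\min$ of a submodular function with another function is not automatically submodular; one needs the specific form (a truncation of a submodular function by a supermodular one, composed with a modular shift) and should cite or reproduce the relevant preservation lemma from \cite{Frank2011ConnectionsOptimization}; and (2) proving that the new g-polymatroid is \emph{exactly} the intersection, not merely contains it or is contained in it — the nontrivial direction being that every inequality $u(A) \le b_{s+1}(A)$ is implied by the old inequalities plus $u(S) \le \overline{x}(s+1)$ (easy, since $b_{s+1}(A)$ is by construction a valid upper bound), and conversely that no valid inequality is lost, which follows because the constraint $u(S) \le \overline{x}(s+1)$ is itself recovered by taking $A = S$ and checking $b_{s+1}(S) = \overline{x}(s+1)$ (using $p_s(\emptyset) = 0$). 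Once paramodularity is in hand, Definition~\ref{dfn:g_polymatroid} guarantees $\mathcal{Q}(p_{s+1}, b_{s+1})$ is a nonempty polytope with the right inequality description, completing the induction. It may also be cleanest to first record a standalone lemma: \emph{the intersection of a g-polymatroid $\mathcal{Q}(p,b)$ with a slab $\alpha \le u(S) \le \beta$ (with $\alpha,\beta$ chosen so the intersection is nonempty) is the g-polymatroid given by the contraction formulas above}, and then apply it $T$ times — this isolates the combinatorial content and makes the DER-specific part a routine unwinding of definitions.
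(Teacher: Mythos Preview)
Your proposal is correct and follows essentially the same route as the paper: induction on $s$, adding one SoC slab $\underline{x}(s+1)\le u([s+1])\le\overline{x}(s+1)$ at a time, carrying the separability $b_s(A)=b_s(A\cap S)+\overline{u}(A\cap S')$ (and its supermodular counterpart) as part of the inductive hypothesis, and invoking Frank's plank-intersection result \cite[Theorem~14.2.14]{Frank2011ConnectionsOptimization} to obtain the recursion \eqref{eq:individual_param} together with paramodularity of $(p_{s+1},b_{s+1})$. The paper makes the Cartesian-product decomposition $\mathcal{Q}(p_s,b_s)=\mathcal{Q}(p_s^S,b_s^S)\times\mathcal{Q}(p_s^{S'},b_s^{S'})$ explicit before intersecting, which is exactly the ``pull out the modular $S'$-contribution'' move you describe.
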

This result is formally proved in the Appendix.

\begin{remark}
    Theorem \ref{lem:individual_flexibility_sets_g_polymatroid} provides a characterization of the individual flexibility sets in terms of g-polymatroids. The corresponding super- and submodular functions, $p_T$ and $b_T$, admit an intuitive interpretation: for each subset $A \subseteq \mathcal{T}$, $b_T(A)$ specifies the maximum cumulative energy the device can consume during time steps within the interval $A$. Equivalently, $p_T(A)$ represents the minimum cumulative energy the device can consume (or the maximum it can generate) within those time steps.  From their derivation $p_T$ and $b_T$ implicitly enforce the device's operational constraints. We show the super- and submodular functions for a simple case in Fig. \ref{fig:paramodular}.
\end{remark}
 
\begin{remark}\label{rem:genpolyedges}
    G-polymatroids are a natural way of expressing the flexibility in DERs. The edges of a g-polymatroid are directed as multiples of the vectors $e_j$ or $e_j - e_k$, $\forall j,k \in \mathcal{T}$, \cite{Frank2014CharacterizingPolymatroids}. Equivalently, the normal fans associated with the flexibility sets are coarsenings of the (projected) braid fan, \cite[3.2]{PostnikovAlex2008FacesPermutohedra.}. Consider the state of charge dynamics from \eqref{eq:soc_dynamics}, this endows our flexibility sets with some symmetry: charging more during one time step necessitates charging less by an equivalent amount during another. For example, let $v$ be an extreme point of $\mathcal{F}(\xi)$, representing an extremal consumption profile for the device. Under \eqref{eq:soc_dynamics}, increasing the $j^{th}$ component of $u$ (to charge more at time $j$) forces a commensurate decrease in the $k^{th}$ (charging less at time $k$), effectively moving in the direction $e_j - e_k$.  
    One can view this as a discrete geometric analogue of \cite[Lemma III.1]{Evans2020AResources}. 
    This symmetry in the SoC dynamics is precisely what enables us to characterize $\mathcal{F}(\xi)$ as a g-polymatroid. For devices with leaky charging dynamics or non-perfect charging efficiency this symmetry is broken and so their flexibility sets are not g-polymatroids. Nonetheless, g-polymatroids can serve as good inner approximations of these flexibility sets \cite{Mukhi2025AggregatePolymatroids}.
    \end{remark}

\subsection{Aggregate Flexibility Sets}
    We are now in a position to apply Theorem \ref{thm:g_polymatroid_sum} to provide an exact representation of the aggregate flexibility set of a population of devices. In the following we omit the subscript $T$ on the super- and submodular functions, and reintroduce the subscript $i \in \mathcal{N}$, so that the flexibility set of device $i$ is denoted $\mathcal{F}(\xi_i) = \mathcal{Q}(p_i, b_i)$.

    \begin{theorem}[Aggregation]\label{thm:agg_flex_set}
        The aggregate flexibility set $\mathcal{F}(\Xi_{\mathcal{N}})$ is a g-polymatroid, denoted $\mathcal{F}(\Xi_{\mathcal{N}}) = \mathcal{Q}(p, b)$, where
        \begin{equation}\label{eq:agg_super_sub}
            p(A) = \sum_{i \in \mathcal{N}} p_i(A), \quad 
            b(A) = \sum_{i \in \mathcal{N}} b_i(A).
        \end{equation}
        \end{theorem}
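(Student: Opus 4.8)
The plan is to derive \cref{thm:agg_flex_set} as an immediate consequence of the Sum Theorem (\cref{thm:g_polymatroid_sum}) combined with \cref{lem:individual_flexibility_sets_g_polymatroid}. By \cref{lem:individual_flexibility_sets_g_polymatroid}, each individual flexibility set is a g-polymatroid, $\mathcal{F}(\xi_i) = \mathcal{Q}(p_i, b_i)$, where $(p_i, b_i)$ is the paramodular pair produced by the recursion \eqref{eq:individual_param}. Since the aggregate flexibility set is by definition the Minkowski sum $\mathcal{F}(\Xi_{\mathcal{N}}) = \sum_{i\in\mathcal{N}} \mathcal{F}(\xi_i) = \sum_{i\in\mathcal{N}} \mathcal{Q}(p_i, b_i)$, \cref{thm:g_polymatroid_sum} applies directly and yields $\mathcal{F}(\Xi_{\mathcal{N}}) = \mathcal{Q}\bigl(\sum_i p_i, \sum_i b_i\bigr)$, which is exactly the claimed form with $p = \sum_i p_i$ and $b = \sum_i b_i$.

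The one genuine obligation beyond quoting the Sum Theorem is to confirm that the hypotheses of that theorem are met, i.e. that $\mathcal{Q}(p,b)$ is a legitimate g-polymatroid — equivalently, that the pair $(p,b)$ with $p=\sum_i p_i$, $b=\sum_i b_i$ is paramodular. I would note that this follows from the remark after \cref{thm:g_polymatroid_sum}: sums of supermodular functions are supermodular and sums of submodular functions are submodular (superadditivity of the defining inequalities), $\sum_i p_i(\emptyset) = \sum_i b_i(\emptyset) = 0$ since each summand vanishes on $\emptyset$, and the cross-inequality $b(A) - p(B) \geq b(A\setminus B) - p(B\setminus A)$ is preserved under summation because it is linear in $(p_i, b_i)$ and holds termwise for each paramodular pair $(p_i, b_i)$ guaranteed by \cref{lem:individual_flexibility_sets_g_polymatroid}. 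Hence the family of g-polymatroids is closed under Minkowski sums, and $\mathcal{Q}(p,b)$ is well-defined and non-empty.

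The main (and essentially only) obstacle in this argument has already been discharged, namely establishing \cref{lem:individual_flexibility_sets_g_polymatroid} — that each $\mathcal{F}(\xi_i)$ is a g-polymatroid with the explicitly given paramodular pair; that is the content proved in the Appendix and is what makes the present theorem short. Given that lemma, the proof of \cref{thm:agg_flex_set} is a two-line composition of the definition of the aggregate set as a Minkowski sum with \cref{thm:g_polymatroid_sum}, plus the observation that paramodularity is preserved under addition. I would therefore write the proof essentially as: ``By \cref{lem:individual_flexibility_sets_g_polymatroid}, $\mathcal{F}(\xi_i) = \mathcal{Q}(p_i,b_i)$ for each $i\in\mathcal{N}$; since $\mathcal{F}(\Xi_{\mathcal{N}})$ is the Minkowski sum of these sets, \cref{thm:g_polymatroid_sum} gives $\mathcal{F}(\Xi_{\mathcal{N}}) = \mathcal{Q}(\sum_i p_i, \sum_i b_i)$, and paramodularity of $(\sum_i p_i, \sum_i b_i)$ follows from termwise paramodularity. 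This establishes \eqref{eq:agg_super_sub}.''
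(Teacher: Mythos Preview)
Your proposal is correct and follows exactly the paper's own approach: the paper's proof consists of the single sentence ``The proof follows directly from applying Theorem~\ref{lem:individual_flexibility_sets_g_polymatroid} to Theorem~\ref{thm:g_polymatroid_sum}.'' Your additional verification that paramodularity is preserved under summation is a harmless elaboration, though strictly speaking it is already subsumed by the statement of the Sum Theorem.
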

    \begin{proof}
        The proof follows directly from applying Theorem \ref{lem:individual_flexibility_sets_g_polymatroid} to Theorem \ref{thm:g_polymatroid_sum}.
    \end{proof}
    \begin{remark}
        
    The super- and submodular functions $p$ and $b$ for the aggregate flexibility set $\mathcal{F}(\Xi_{\mathcal{N}})$ are derived as the sums of the corresponding functions for the individual flexibility sets $\mathcal{F}(\xi_i)$.
    These set functions, $p$ and $b$, exactly encode all constraints of the aggregate flexibility of the population of devices.
    Referring to Definitions \ref{dfn:submodular} and \ref{dfn:g_polymatroid}, the aggregate flexibility set $\mathcal{F}(\Xi_{\mathcal{N}})$ is defined by $2^{\mathcal{T}+1}$ hyperplanes, providing a lower and upper bound for each subset of $\mathcal{T}$, as determined by the super- and submodular functions.
    For practically relevant values of $\mathcal{T}$, computing the full set of hyperplanes is computationally infeasible.
    However, as we demonstrate in the following section, it is unnecessary to compute the entire set to solve many relevant problems.
    \end{remark}

\subsection{EV Aggregation}
Theorem~\ref{lem:individual_flexibility_sets_g_polymatroid} defines super- and submodular set functions that exactly characterize the flexibility of a broad class of DERs. For particular device categories, these functions admit further simplifications, allowing the aggregate supermodular and submodular functions in Theorem~\ref{thm:agg_flex_set} to be written more compactly. In this subsection, we demonstrate this reduction for a fleet of charging-only EVs with heterogeneous arrival and departure times. While other such simplifications are also possible, we restrict our discussion to this example for brevity.

By substituting the EV‐specific parameters \(\underline{x}_i,\;\overline{x}_i,\;\underline{u}_i\) and \(\overline{u}_i\) from Section~\ref{subsect:expressivity} into 
Theorem~\ref{lem:individual_flexibility_sets_g_polymatroid}, in particular noting that $\underline{u}_i(t) = 0$ for all $t$, the super‐ and submodular functions for each device simplify to

\begin{subequations}\label{eq:simplified_ev_modular}
\begin{align}
    p_i(A) &= \max\bigl\{\underline{e}_i \;-\; \lvert C_i \setminus A\rvert\,m_i,\; 0\bigr\},\\
    b_i(A) &= \min\bigl\{\lvert A \cap C_i\rvert\,m_i,\;\overline{e}_i\bigr\}.
\end{align}
\end{subequations}
To simplify the super- and submodular functions of the aggregate flexibility set, we first introduce for each vehicle two extremal charging profiles: \(\underline{v}_i\), which delays charging as long as possible, and \(\overline{v}_i\), which front‐loads charging to finish as early as possible:
\begin{equation*}
        \underline{v}_i(t) = 
        \begin{cases}
            0                        & t <  \underline{q}_i \\
            \underline{r}_i          & t =  \underline{q}_i \\
            m_i                      & t \geq  \underline{q}_i \\
        \end{cases}
\quad
        \overline{v}_i(t) = 
        \begin{cases}
            m_i                    & t <  \overline{q}_i \\
            \overline{r}_i         & t =  \overline{q}_i \\
            0                      & t \geq \overline{q}_i \\
        \end{cases}
    \end{equation*}
where $\underline{q}_i = \lfloor \underline{e}_i / m_i \rfloor$, $\underline{r}_i = \underline{e}_i - \underline{q}_i m_i$, and similarly for $\overline{q}_i$ and $\overline{r}_i$. We can then write \eqref{eq:simplified_ev_modular} as:
\[
    p_i(A) \;=\;\sum_{t=1}^{|A \cap C_i|} \underline{v}_i(t), 
    \quad
    b_i(A) \;=\;\sum_{t=1}^{|A \cap C_i|} \overline{v}_i(t).
\]
Now we define $ \mathcal{N}_{a,d} := \{i\in\mathcal{N} : a_i = a,\ d_i = d\}$ as the subset of EVs that share the same arrival and departure time. 
Applying Theorem~\ref{thm:agg_flex_set} to this subset of devices we obtain the the aggregate super- and submodular functions for a set of EVs with homogeneous arrival and departures times
\begin{equation*}
  p_{a,d}(A) = \sum_{t=1}^{\lvert A\cap [a,d]|} \underline v_{a,d}(t),
  \quad
  b_{a,d}(A) = \sum_{t=1}^{\lvert A\cap [a,d]|} \overline v_{a,d}(t).
\end{equation*}
where
\[
  \underline v_{\,a,d}(t) = \sum_{i\in\mathcal{N}_{a,d}} \underline v_i(t),
  \quad
  \overline v_{\,a,d}(t)=\sum_{i\in\mathcal{N}_{a,d}} \overline v_i(t).
\]
These super- and submodular functions define the \textit{regular permutahedron} \cite{Postnikov2009PermutohedraBeyond}. From this, we recover the characterisation of the aggregate flexibility set as a permutahedron, as derived in \cite{Mukhi2023AnVehicles} and \cite{Panda2024EfficientVehicles}. By summing the super- and submodular functions across all arrival and departure intervals, we obtain the fleet-wide super- and submodular functions as:
\begin{equation*}\label{eq:V1G_aggregate_permutahedra}
    p(A) = \sum_{a < d}\sum_{t=1}^{\lvert A\cap [a,d]|} \underline v_{a,d}(t)
    \quad
    b(A) = \sum_{a < d}\sum_{t=1}^{\lvert A\cap [a,d]|} \overline v_{a,d}(t).
\end{equation*}
This yields a representation of the  aggregate flexibility that is independent of the number of devices in the population. As a result, even for extremely large device fleets, the aggregate flexibility can be represented compactly.

\section{Optimization}\label{sec:optimization}
With an exact representation of the aggregate flexibility of a population of DERs derived in the previous section, we now focus on optimizing over this set. From Theorem \ref{thm:agg_flex_set} we are given a representation of the aggregate flexibility set as the g-polymatroid $\mathcal{F}(\Xi_N) = \mathcal{Q}(p, b)$, defined by $2^{T+1}$ hyperplanes. For practical values of $T$ explicitly representing all constraints of $\mathcal{Q}(p, b)$ is infeasible. However, as we shall show in this section optimizing over the sets is indeed feasible.
In particular we consider how to solve general problems of the form
\begin{equation}\label{eq:general_prob}
    \begin{aligned}
        & \underset{}{\text{minimize}}
        & & f(u) \\
        & \text{subject to}
        & & u \in \mathcal{Q}(p,b),  \quad 
        Cu \leq d.
    \end{aligned}
\end{equation}
We present this formulation in a general form, as it can be used to model a broad class of optimization problems relevant to an aggregator. To solve the linear and non-linear variants of \eqref{eq:general_prob} we can apply Dantzig-Wolfe or Frank-Wolfe decomposition \cite{Dantzig1960DecompositionPrograms} \cite{Frank1956AnProgramming}. However, the computational efficiency of these methods relies on the existence of fast and scalable algorithms for solving linear programs over $\mathcal{Q}(p,b)$, i.e. solving:

\begin{equation}\label{eq:lp_g_polymatroid}
    \underset{}{\text{minimize}} \;\;  c^Tu \quad
    \text{subject to} \quad u \in \mathcal{Q}(p,b). 
\end{equation}
Conveniently, g-polymatroids provide an efficient method of solving this class of problems, so the rest of this section is concerned with solving problems of the form of \eqref{eq:lp_g_polymatroid}.

\subsection{Lifting to Base Polyhedron}
We first show how one can define $\mathcal{Q}(p,b)$ as a projection of a \textit{base polyhedron}. 

\begin{definition} The \emph{base polyhedron} associated with $b$ is the intersection of the submodular polyhedron $ \mathcal{P}(b)$ and the plane $u(\mathcal{T}) = b(\mathcal{T})$:
\begin{equation*}
    \mathcal{B}(b) := \left\{ u \in  \mathbb{R}^{\mathcal{T}} \mid u(A) \leq b(A) \;\; \forall A \subseteq \mathcal{T}, u(\mathcal{T}) = b(\mathcal{T})\right\}.
\end{equation*} 
\end{definition}
To recast $\mathcal{Q}(p,b)$ as a base polyhedra, we extend the ground set $\mathcal{T}$ with a new element $\Tilde{t}$, such that $\Tilde{\mathcal{T}} := \mathcal{T} + \Tilde{t}$. 

\begin{theorem}[Projection]\label{thm:projection}\cite[Theorem 14.2.4]{Frank2011ConnectionsOptimization} 
$\mathcal{Q}(b,p)$ is the projection of $\mathcal{B}(\Tilde{b})$ along $\Tilde{t}$, where 
\begin{equation*}
        \Tilde{b}(A) := 
        \begin{cases}
            b(A)                    & A \subseteq \mathcal{T}\\
            -p(\mathcal{T} \setminus A)     & \Tilde{t} \in A.
        \end{cases}
\end{equation*}
\end{theorem}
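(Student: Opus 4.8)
The plan is to verify directly that $\tilde b$ is submodular with $\tilde b(\emptyset)=0$, so that $\mathcal{B}(\tilde b)$ is a genuine (nonempty) base polyhedron in $\mathbb{R}^{\tilde{\mathcal{T}}}$, and then to read off the projection by substituting the single equality constraint into the inequality description. This is the standard lifting construction for g-polymatroids, and the argument below is essentially the proof of \cite[Theorem 14.2.4]{Frank2011ConnectionsOptimization}.

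First I would check submodularity of $\tilde b$ by splitting on whether $\tilde t$ lies in the two test sets $X,Y\subseteq\tilde{\mathcal{T}}$. When $\tilde t\notin X\cup Y$ the inequality $\tilde b(X)+\tilde b(Y)\ge \tilde b(X\cap Y)+\tilde b(X\cup Y)$ is just submodularity of $b$. When $\tilde t\in X\cap Y$, writing $X=X_0\cup\{\tilde t\}$, $Y=Y_0\cup\{\tilde t\}$ and setting $P=\mathcal{T}\setminus X_0$, $Q=\mathcal{T}\setminus Y_0$, the inequality turns into $p(P)+p(Q)\le p(P\cup Q)+p(P\cap Q)$, i.e.\ supermodularity of $p$. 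The remaining mixed case $\tilde t\in X$, $\tilde t\notin Y$ is where the real work lies: writing $X=X_0\cup\{\tilde t\}$ and $C=\mathcal{T}\setminus X_0$, the inequality reduces after complementation to $b(Y)-b(Y\setminus C)\ge p(C)-p(C\setminus Y)$, which is exactly the cross-inequality of the paramodular pair $(p,b)$ applied with $A=Y$, $B=C$; since $X_0$ and $Y$ range over all subsets of $\mathcal{T}$, every instance of the cross-inequality is covered. Together with $\tilde b(\emptyset)=b(\emptyset)=0$, standard submodular theory then gives $\mathcal{B}(\tilde b)\neq\emptyset$.

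Next I would compute $\tilde b(\tilde{\mathcal{T}})=-p(\mathcal{T}\setminus\tilde{\mathcal{T}})=-p(\emptyset)=0$, so every point $u=(u',u(\tilde t))\in\mathcal{B}(\tilde b)$ satisfies $u'(\mathcal{T})+u(\tilde t)=0$; hence the $\tilde t$-coordinate is forced to equal $-u'(\mathcal{T})$ and the projection along $\tilde t$ is injective on $\mathcal{B}(\tilde b)$. It then remains to characterise which $u'\in\mathbb{R}^{\mathcal{T}}$ extend to a point of $\mathcal{B}(\tilde b)$. For a subset $A\subseteq\tilde{\mathcal{T}}$ with $\tilde t\notin A$, the constraint $u(A)\le\tilde b(A)$ reads $u'(A)\le b(A)$; for $A=B\cup\{\tilde t\}$ with $B\subseteq\mathcal{T}$, using $u(\tilde t)=-u'(\mathcal{T})$ gives $u(A)=u'(B)-u'(\mathcal{T})=-u'(\mathcal{T}\setminus B)$ and $\tilde b(A)=-p(\mathcal{T}\setminus B)$, so the constraint reads $u'(\mathcal{T}\setminus B)\ge p(\mathcal{T}\setminus B)$. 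As $A$ ranges over all subsets of $\tilde{\mathcal{T}}$ (equivalently, as $B$ ranges over all subsets of $\mathcal{T}$), these are precisely the defining inequalities of $\mathcal{P}(b)$ and of $\mathcal{P}'(p)$. Therefore $u'$ lies in the projection of $\mathcal{B}(\tilde b)$ if and only if $u'\in\mathcal{P}(b)\cap\mathcal{P}'(p)=\mathcal{Q}(p,b)$.

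I expect the only genuine obstacle to be the mixed case of the submodularity check, since that is the single place where the full strength of paramodularity (the cross-inequality, rather than mere sub-/supermodularity of $b$ and $p$ separately) is required; everything else is bookkeeping with set complements and the observation that the lifted equality constraint pins down the extra coordinate.
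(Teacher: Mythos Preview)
Your argument is correct and complete. Note, however, that the paper does not supply its own proof of this theorem: it is quoted directly from \cite[Theorem~14.2.4]{Frank2011ConnectionsOptimization} as a known result and used as a black box. What you have written is precisely the standard proof of that result --- the case split on the location of $\tilde t$ to reduce submodularity of $\tilde b$ to the three defining properties of a paramodular pair, followed by eliminating the auxiliary coordinate via the equality $u(\tilde{\mathcal{T}})=0$ --- so there is nothing to compare against in the paper itself, and your proposal would serve perfectly well as a self-contained appendix proof were one desired.
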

Alternatively put, if $\Tilde{u} \in \mathbb{R}^{\Tilde{\mathcal{T}}}$ is feasible in $\mathcal{B}(\Tilde{b})$ then its linear projection, $u \in \mathbb{R}^\mathcal{T}$, along $\Tilde{t}$, obtained by omitting $\Tilde{u}(\Tilde{t})$, will also be feasible in $\mathcal{Q}(p,b)$. 

\subsection{A Greedy Algorithm}\label{subsection:greedy}
Using Theorem \ref{thm:projection} the problem in \eqref{eq:lp_g_polymatroid} can be restated as a linear optimization over a submodular base polyhedron:
\begin{equation}\label{eq:lp_base}
    \underset{}{\text{minimize}} \;\; \Tilde{c}^T \Tilde{u}, \quad
    \text{subject to} \quad \Tilde{u} \in \mathcal{B}(\Tilde{b}), 
\end{equation}
with the cost vector extended by $\Tilde{c}(\Tilde{t}) = 0$ and $\Tilde{c}(t) = c(t) \; \forall t \in \mathcal{T}$.
A fundamental result in submodular optimization states that linear programs over a submodular base polyhedron admit a greedy solution procedure \cite{Fujishige2005SubmodularOptimization}. For completeness we outline this greedy procedure below. 

Consider the symmetric group $\mathrm{Sym}(\Tilde{\mathcal{T}})$, which is the group of all permutations of the set $\Tilde{\mathcal{T}}$. Let $\pi$ be the permutation in $\mathrm{Sym}(\Tilde{\mathcal{T}})$ that arranges the components of $\Tilde{c}$ in non-decreasing order, i.e.,
\begin{equation}\label{eq:cost_order}
    \Tilde{c}(\pi(1)) \leq \Tilde{c}(\pi(2)) \leq ... \leq \Tilde{c}(\pi(T+1)). 
\end{equation}
For $t \in \{0, 1,..., T + 1\}$, define $S_t := \{\pi(1), ..., \pi(t)\}$ as the set of the first $t$ elements of $\pi$. By definition $S_0 = \emptyset$ and $S_{\Tilde{\mathcal{T}}} = \Tilde{\mathcal{T}}$. 
Now we construct $\Tilde{u}^* \in \mathbb{R}^{\Tilde{\mathcal{T}}}$ as follows:
\begin{equation}
    \Tilde{u}^*(t) = \Tilde{b}(S_t) - \Tilde{b}(S_{t-1}) \quad \forall \; t \in \Tilde{\mathcal{T}}.
\end{equation}

\begin{theorem}[Greedy Algorithm]\label{thm:greedy_alg}\cite[Theorem 14.5.2]{Frank2011ConnectionsOptimization}
    $\Tilde{u}^*$ is in the base polyhedron $\mathcal{B}(\Tilde{b})$ and an optimal solution to \eqref{eq:lp_base}.
\end{theorem}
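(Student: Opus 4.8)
The plan is to prove the classical greedy theorem for submodular base polyhedra in two parts: (i) feasibility of $\Tilde u^*$, and (ii) optimality via LP duality. Throughout, write $m := T+1$, relabel so that $\pi$ is the identity (i.e. assume $\Tilde c(1) \le \Tilde c(2) \le \cdots \le \Tilde c(m)$), and set $S_t = \{1,\dots,t\}$, so $\Tilde u^*(t) = \Tilde b(S_t) - \Tilde b(S_{t-1})$.

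\emph{Step 1: feasibility.} By telescoping, $\Tilde u^*(S_t) = \Tilde b(S_t)$ for every $t$; in particular $\Tilde u^*(\Tilde{\mathcal T}) = \Tilde b(\Tilde{\mathcal T})$, so $\Tilde u^*$ satisfies the defining equality of the base polyhedron. It remains to check $\Tilde u^*(A) \le \Tilde b(A)$ for an arbitrary $A \subseteq \Tilde{\mathcal T}$. I would do this by induction on $|A|$ using submodularity: if $A \neq \emptyset$, let $k = \max A$; apply \eqref{eq:submodular} to $A \setminus \{k\}$ and $S_{k-1}$, noting $(A\setminus\{k\})\cup S_{k-1} = S_{k-1}\cup(A\setminus\{k\})$ and $(A\setminus\{k\})\cap S_{k-1} = A \cap S_{k-1} = A\setminus\{k\}$ \ldots more carefully, one compares $A$ with $S_k$: submodularity gives $\Tilde b(A) + \Tilde b(S_{k-1}) \ge \Tilde b(A\cap S_{k-1}) + \Tilde b(A\cup S_{k-1})$, and since $A \subseteq S_k$ one gets $\Tilde b(A) - \Tilde b(A\setminus\{k\}) \ge \Tilde b(S_k) - \Tilde b(S_{k-1}) = \Tilde u^*(k)$; combining with the inductive bound $\Tilde u^*(A\setminus\{k\}) \le \Tilde b(A\setminus\{k\})$ yields $\Tilde u^*(A) \le \Tilde b(A)$. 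This uses only submodularity of $\Tilde b$, which holds because $\Tilde b$ built from a paramodular pair $(p,b)$ is submodular on $2^{\Tilde{\mathcal T}}$ (this is part of the content behind \cref{thm:projection}).

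\emph{Step 2: optimality.} I would exhibit a dual-feasible solution with matching objective value, i.e. verify the complementary-slackness / LP-duality certificate. Writing the primal as $\min\{\Tilde c^\top \Tilde u : \Tilde u(A) \le \Tilde b(A)\ \forall A \subsetneq \Tilde{\mathcal T},\ \Tilde u(\Tilde{\mathcal T}) = \Tilde b(\Tilde{\mathcal T})\}$, the natural dual assigns nonnegative weights $y_A$ to the inequality constraints. The candidate dual solution is supported only on the chain $\emptyset \subsetneq S_1 \subsetneq \cdots \subsetneq S_{m-1} \subsetneq S_m = \Tilde{\mathcal T}$, with $y_{S_t} = \Tilde c(t) - \Tilde c(t+1) \ge 0$ for $t = 1,\dots,m-1$ (nonnegative precisely because $\Tilde c$ is sorted in non-decreasing order under the convention that larger-cost elements are added later — I should double-check the sign/direction, adjusting to $y_{S_t} = \Tilde c(t+1)-\Tilde c(t)$ if the greedy inserts cheapest-first), and the free multiplier on $S_m$ chosen so that $\sum_t y_{S_t}\mathbf 1_{S_t} = \Tilde c$. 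Then dual feasibility is immediate from the sorting, primal feasibility is Step 1, and complementary slackness holds because $\Tilde u^*(S_t) = \Tilde b(S_t)$ exactly on the support of $y$; hence strong duality certifies optimality. Alternatively, and perhaps cleaner for the writeup, I would use the direct exchange argument: for any feasible $\Tilde u$, Abel summation gives $\Tilde c^\top \Tilde u = \sum_{t=1}^{m-1}(\Tilde c(t)-\Tilde c(t+1))\,\Tilde u(S_t) + \Tilde c(m)\,\Tilde u(\Tilde{\mathcal T})$, and bounding each $\Tilde u(S_t) \le \Tilde b(S_t)$ (valid direction again depending on the sign of $\Tilde c(t)-\Tilde c(t+1)$) together with $\Tilde u(\Tilde{\mathcal T}) = \Tilde b(\Tilde{\mathcal T})$ shows this is extremized by $\Tilde u^*$, for which every bound is tight.

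\emph{Main obstacle.} The genuine mathematical content is the submodular inequality argument in Step 1; everything in Step 2 is bookkeeping with Abel summation once the chain structure is identified. The place most likely to cause trouble is sign/orientation consistency: whether the greedy rule here (minimize $\Tilde c^\top \Tilde u$, sort $\Tilde c$ non-decreasingly, add elements in that order) pairs with $\le$ or $\ge$ in the submodular polyhedron, and correspondingly whether the dual weights $\Tilde c(t) - \Tilde c(t+1)$ or their negatives are the nonnegative ones. I would pin this down once at the start by checking the two-element case $\Tilde{\mathcal T} = \{1,2\}$ with $\Tilde c(1) < \Tilde c(2)$ against $\mathcal B(\Tilde b)$, and then carry that orientation consistently through both steps.
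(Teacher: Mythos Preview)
The paper does not supply its own proof of this theorem: it is quoted verbatim as \cite[Theorem 14.5.2]{Frank2011ConnectionsOptimization} and left unproven, so there is nothing to compare against at the level of argument. Your proposal is the standard textbook proof and is essentially correct.

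A few remarks on the execution. In Step~1 your induction is fine once the indices are written out cleanly: with $k=\max A$ one has $A\subseteq S_k$, $A\cap S_{k-1}=A\setminus\{k\}$ and $A\cup S_{k-1}=S_k$, so submodularity applied to the pair $(A,S_{k-1})$ gives exactly $\Tilde b(A)-\Tilde b(A\setminus\{k\})\ge \Tilde b(S_k)-\Tilde b(S_{k-1})=\Tilde u^*(k)$, and the induction closes. That $\Tilde b$ is submodular on $2^{\Tilde{\mathcal T}}$ is indeed the content behind Theorem~\ref{thm:projection} (the lift of a paramodular pair to a submodular function on the extended ground set), so you are right to invoke it there rather than re-derive it.

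In Step~2 the orientation you are worrying about resolves as follows: with $\Tilde c$ sorted non-decreasingly, the Abel coefficients $\Tilde c(t)-\Tilde c(t+1)$ are non\emph{positive}, so minimizing $\Tilde c^\top\Tilde u$ over $\mathcal B(\Tilde b)$ means \emph{maximizing} each partial sum $\Tilde u(S_t)$, which is capped at $\Tilde b(S_t)$ and attained with equality by $\Tilde u^*$. Equivalently, in the dual picture it is $y_{S_t}=\Tilde c(t+1)-\Tilde c(t)\ge 0$ that are the nonnegative multipliers. Once you fix this sign at the outset the hedging can be dropped; the two-element sanity check you mention would have told you the same thing.
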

\begin{remark}   
Note that constructing \(\tilde u^*\) via the greedy algorithm requires exactly \(T+1\) evaluations of the submodular function \(\tilde b\), one for each marginal increment \(\tilde b(S_t)-\tilde b(S_{t-1})\).  Because these evaluations are mutually independent, they can be distributed and executed in parallel, so that the overall wall‐clock time is dominated by a single call to \(\tilde b\) plus the \(O(T\log T)\) cost of sorting the entries of \(\tilde c\).  Once \(\tilde u^*\) is obtained, the optimal decision \(u^*\) for the original g-polymatroid LP \eqref{eq:lp_g_polymatroid} follows immediately by discarding the auxiliary component \(\tilde u^*(\tilde t)\).  Embedding this within Dantzig–Wolfe or Frank–Wolfe decomposition schemes then enables efficient solving of both linear and nonlinear variants of the general aggregator problems of \eqref{eq:general_prob}, even for large time horizons.  
\end{remark}

We conclude this section by describing how to label the vertices of $\mathcal{B}(\Tilde{b})$, and by extension $\mathcal{Q}(p,b)$. This labeling will become relevant for the disaggregation process described in the following section. In the greedy algorithm, we select the permutation $\pi \in \mathrm{Sym}(\Tilde{\mathcal{T}})$ that arranges elements of $\Tilde{c}$ in non-decreasing order. This particular permutation uniquely determines the vertex of $\mathcal{B}(\Tilde{b})$ that correspond to the optimal solution of \eqref{eq:lp_base}. Consequently, there is a surjection between elements of  $\mathrm{Sym}(\Tilde{\mathcal{T}})$ and the vertices of $\mathcal{B}(\Tilde{b})$, and hence $\mathcal{Q}(p,b)$. Therefore, we can label the vertices of $\mathcal{Q}(p,b)$, denoted $ \mathcal{V}_{\mathcal{Q}(p,b)}$, with elements of $\mathrm{Sym}(\Tilde{\mathcal{T}})$: 
\begin{equation}\label{eq:vertex_set}
    \mathcal{V}_{\mathcal{Q}(p,b)}  = \left\{ v^\pi | \pi \in \mathrm{Sym}(\Tilde{\mathcal{T}}) \right\}.
\end{equation}
This is possible because the normal fan of submodular base polyhedra is the braid fan. This statement is related to the comments made in Remark \ref{rem:genpolyedges}. A detailed discussion of this is beyond the scope of this paper; however, interested readers are referred to \cite{Postnikov2009PermutohedraBeyond} for an exposition of this.

\section{Disaggregation}\label{sec:disaggregation}

With an optimal aggregate consumption profile, $u^*_\mathcal{N}$, given by the solution to \eqref{eq:general_prob}, the final task of an aggregator is to disaggregate this among devices in the population. This involves finding a feasible consumption profile for each device, such that 
the sum of the profiles is equal to the optimal aggregate consumption profile, as formalized in \eqref{eq:disaggregation}. To provide some intuition behind the disaggregation process we propose, we draw on the following two results.
\begin{theorem}[Carathéodory’s theorem]\cite[Proposition 1.15]{Ziegler2012LecturesPolytopes}\label{thm:caratheodory}
Let \(\mathcal{F}\subset\mathbb{R}^{\mathcal{T}}\) be a polytope and \(u\in\mathcal{F}\), where $|\mathcal{T}| = T$.  Then there exist \(T+1\) vertices \(v^1,\dots,v^{T+1}\in\mathcal{V}_{\mathcal{F}}\) and non-negative weights \(\lambda_1,\dots,\lambda_{T+1}\) with \(\sum_{j=1}^{T+1}\lambda^j=1\) such that $u \;=\;\sum_{j=1}^{T+1}\lambda^j\,v^j$.
\end{theorem}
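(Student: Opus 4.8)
The plan is to use the standard linear‐dependence reduction argument, since this is the polytope form of Carathéodory's theorem and no structure specific to g‐polymatroids is needed. First I would invoke the fact that a polytope is the convex hull of its finitely many vertices, so $u$ admits \emph{some} representation $u = \sum_{j=1}^m \mu^j v^j$ with $v^1,\dots,v^m \in \mathcal{V}_{\mathcal{F}}$, $\mu^j \geq 0$, and $\sum_{j=1}^m \mu^j = 1$. If $m \leq T+1$, the claim follows at once by padding the list with zero‐weight vertices, so I may assume $m \geq T+2$.

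The key step is to reduce the number of terms by one. Since $m-1 \geq T+1 > T$, the difference vectors $v^2 - v^1,\dots,v^m - v^1$ are linearly dependent in $\mathbb{R}^{\mathcal{T}}$, so there are scalars $\alpha^1,\dots,\alpha^m$, not all zero, with $\sum_{j=1}^m \alpha^j v^j = 0$ and $\sum_{j=1}^m \alpha^j = 0$. For every $t \in \mathbb{R}$ one then has $u = \sum_{j=1}^m (\mu^j - t\alpha^j) v^j$ with weights still summing to $1$. Because $\sum_j \alpha^j = 0$ and the $\alpha^j$ are not all zero, at least one $\alpha^j$ is strictly positive; setting $t^\ast := \min\{\mu^j/\alpha^j : \alpha^j > 0\}$ keeps all weights $\mu^j - t^\ast\alpha^j \geq 0$ while forcing at least one of them to vanish. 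Discarding that vanishing term yields a convex combination of $u$ using at most $m-1$ vertices.

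Iterating this reduction at most $m-(T+1)$ times produces a representation of $u$ as a convex combination of at most $T+1$ vertices; relabelling these as $v^1,\dots,v^{T+1}$ and, if fewer than $T+1$ survive, appending arbitrary vertices with weight $0$, gives exactly the asserted form with $\sum_{j=1}^{T+1}\lambda^j = 1$. The only point requiring care — really bookkeeping rather than a genuine obstacle — is checking that each reduction step strictly lowers the term count while preserving both non‐negativity and the normalization, which is precisely what the minimum‐ratio choice of $t^\ast$ guarantees.
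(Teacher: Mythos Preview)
Your argument is correct: it is the standard linear–dependence reduction proof of Carathéodory's theorem, and each step (existence of a vertex representation, construction of a nontrivial affine dependence when $m\geq T+2$, the minimum-ratio choice of $t^\ast$, and the padding with zero-weight vertices) is sound.

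There is nothing to compare against here, since the paper does not prove this statement at all: it is quoted as \cite[Proposition~1.15]{Ziegler2012LecturesPolytopes} and used as a black box in the proof of Theorem~\ref{thm:disaggregation}. Your write-up supplies exactly the classical proof one finds in Ziegler, so it is entirely appropriate as a self-contained justification.
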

\noindent
This provides us with a method of expressing any point in $\mathcal{F}_\mathcal{N}$ in terms of, at most, $T+1$ of its vertices.
\begin{corollary}[Vertex Decomposition]\cite[Corollary 2.2]{Fukuda2004FromPolytopes}\label{thm:fukuda}
Let \(\mathcal{F}_{\mathcal{N}}=\sum_{i\in \mathcal{N}}\mathcal{F}_i\).  A point \(v\in\mathcal{F}_{\mathcal{N}}\) is a vertex of \(\mathcal{F}_{\mathcal{N}}\) if and only if there exists a permutation \(\pi\in\mathrm{Sym}(\Tilde{\mathcal{T}})\) such that $ v \;=\;\sum_{i=1}^N v_i^\pi,$ where \(v_i^\pi\) is the unique vertex of \(\mathcal{F}_i\) selected by the ordering \(\pi\) (cf.\ \eqref{eq:vertex_set}).
\end{corollary}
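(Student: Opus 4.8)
The plan is to deduce this statement directly from the greedy/linear-programming characterization of vertices of g-polymatroids developed in Section \ref{sec:optimization}, rather than re-proving Fukuda's general Minkowski-sum result from scratch. The key observation is that a point $v$ in a polytope $\mathcal{F}_{\mathcal{N}}$ is a vertex if and only if it is the \emph{unique} maximizer of some linear functional $c^{\top}u$ over $\mathcal{F}_{\mathcal{N}}$, and that linear optimization commutes with Minkowski summation: $\max_{u\in\mathcal{F}_{\mathcal{N}}} c^{\top}u = \sum_{i\in\mathcal{N}} \max_{u_i\in\mathcal{F}_i} c^{\top}u_i$, with the maximizers on the left being exactly the sums of maximizers on the right. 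So the whole argument reduces to understanding, for each summand $\mathcal{F}_i = \mathcal{Q}(p_i,b_i)$, when the linear program $\max c^{\top}u_i$ has a unique optimal solution and what that solution is.

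First I would lift each $\mathcal{F}_i$ to its base polyhedron $\mathcal{B}(\tilde b_i)$ via Theorem \ref{thm:projection}, so that maximizing $c^{\top}u_i$ over $\mathcal{Q}(p_i,b_i)$ becomes maximizing $\tilde c^{\top}\tilde u_i$ over $\mathcal{B}(\tilde b_i)$ with $\tilde c(\tilde t)=0$. By Theorem \ref{thm:greedy_alg}, an optimal solution is obtained by the greedy algorithm from any permutation $\pi\in\mathrm{Sym}(\tilde{\mathcal{T}})$ sorting $\tilde c$ in non-decreasing order; call the resulting vertex $\tilde v_i^{\pi}$, and let $v_i^{\pi}$ be its projection (this is the labeling fixed in \eqref{eq:vertex_set}). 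The crucial point is that the \emph{same} permutation $\pi$ can be used simultaneously for every device $i$, because the sorting only depends on $\tilde c$, not on $i$. Hence $\sum_{i} v_i^{\pi}$ is simultaneously optimal for $c^{\top}u_i$ over each $\mathcal{F}_i$, and therefore optimal for $c^{\top}u$ over $\mathcal{F}_{\mathcal{N}}$; this already shows every such sum lies on a face of $\mathcal{F}_{\mathcal{N}}$.

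Next I would establish the two directions of the equivalence. For the ``if'' direction: given $\pi$, choose a cost vector $c$ whose entries are \emph{strictly} increasing along $\pi$ (and set $\tilde c(\tilde t)=0$, perturbing so $\tilde t$'s position is forced or handled by strictness on $\mathcal{T}$); strict ordering makes the greedy solution the \emph{unique} maximizer over each base polyhedron $\mathcal{B}(\tilde b_i)$ — this uniqueness is the standard fact that a generic linear functional over a submodular base polyhedron is uniquely maximized at the greedy vertex. Summing, $\sum_i v_i^{\pi}$ is the unique maximizer of $c^{\top}$ over $\mathcal{F}_{\mathcal{N}}$, hence a vertex. For the ``only if'' direction: if $v$ is a vertex of $\mathcal{F}_{\mathcal{N}}$, pick $c$ with $v$ the unique maximizer; if $\tilde c$ has ties, break them by an arbitrary consistent permutation $\pi$; then each $\mathcal{F}_i$ is maximized on a face, but the sum of these faces is a face of $\mathcal{F}_{\mathcal{N}}$ containing $v$, and since $\{v\}$ is a face this forces each per-device face to be the single greedy vertex $v_i^{\pi}$, giving $v = \sum_i v_i^{\pi}$.

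The main obstacle I anticipate is the careful handling of \textbf{ties and the auxiliary coordinate $\tilde t$}: when $\tilde c$ has equal entries, the greedy vertex depends on the chosen tie-breaking permutation, and one must check that (a) every vertex of $\mathcal{F}_{\mathcal{N}}$ arises from \emph{some} permutation even when no strictly-sorted cost vector singles it out directly, and (b) the projection along $\tilde t$ does not collapse distinct greedy vertices or create spurious ones. Both points follow from the fact (invoked at the end of Section \ref{sec:optimization}) that the normal fan of $\mathcal{B}(\tilde b)$ is a coarsening of the braid fan on $\tilde{\mathcal{T}}$, so the vertices are indexed by the maximal cones of that coarsened fan, each of which contains at least one full permutation chamber; I would make this precise rather than leave it implicit. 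Everything else — the commutation of linear optimization with Minkowski sums, and the greedy characterization — is already available from Theorems \ref{thm:g_polymatroid_sum}, \ref{thm:projection}, and \ref{thm:greedy_alg}.
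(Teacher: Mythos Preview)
The paper does not supply its own proof of this corollary: it is quoted verbatim as \cite[Corollary~2.2]{Fukuda2004FromPolytopes} and then used as a black box in the proof of Theorem~\ref{thm:disaggregation}. So there is no in-paper argument to compare your proposal against.

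That said, your proof sketch is sound. The LP/greedy route you outline is exactly the right mechanism in this g-polymatroid setting, and your handling of the two directions (choose a strictly ordered cost to certify that $\sum_i v_i^{\pi}$ is a vertex; conversely, start from a cost uniquely maximized at $v$ and observe that the optimal face of $\mathcal{F}_{\mathcal{N}}$ is the Minkowski sum of the per-device optimal faces, forcing each to be a singleton) is correct. One simplification worth noting: because $\mathcal{F}_{\mathcal{N}}$ is \emph{itself} the g-polymatroid $\mathcal{Q}(\sum_i p_i,\sum_i b_i)$ by Theorem~\ref{thm:g_polymatroid_sum}, its vertices are already labeled by $\mathrm{Sym}(\tilde{\mathcal{T}})$ via \eqref{eq:vertex_set}, and the greedy increment $\tilde b(S_t)-\tilde b(S_{t-1})$ is linear in $\tilde b$; hence the greedy vertex of $\mathcal{F}_{\mathcal{N}}$ for $\pi$ equals $\sum_i v_i^{\pi}$ on the nose. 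This collapses both directions into a single line and sidesteps the tie-breaking and $\tilde t$-placement issues you flagged, since you never need to manufacture a specific cost vector realizing a prescribed $\pi$.
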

\noindent
Essentially this means that all vertices of $\mathcal{F}_\mathcal{N}$ can be decomposed into vertices of the summands that define $\mathcal{F}_\mathcal{N}$. These two results allow us to provide a method of disaggregating $u_\mathcal{N}$ as we show in the following theorem.
\begin{theorem}[Disaggregation]\label{thm:disaggregation}
    For all $u_\mathcal{N} \in \mathcal{F}_\mathcal{N}$, there exist $\lambda \in \mathbb{R}^{T + 1}$, and $\Pi = \{\pi_1, ..., \pi_{T+1}\} \subset \mathrm{Sym}(\Tilde{\mathcal{T}})$, such that
    \begin{equation}
        u_i = \sum_{j=1}^{T+1} \lambda^j v_i^{\pi_j} \in \mathcal{F}_i \quad \textrm{and} \quad
        u_\mathcal{N} = \sum_i^N u_i.
    \end{equation}
\end{theorem}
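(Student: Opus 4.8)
The plan is to combine Carathéodory's theorem (Theorem~\ref{thm:caratheodory}) with the vertex decomposition result (Corollary~\ref{thm:fukuda}), exploiting the crucial fact that the vertices of $\mathcal{F}_\mathcal{N}$ and of each summand $\mathcal{F}_i$ are jointly indexed by permutations $\pi \in \mathrm{Sym}(\Tilde{\mathcal{T}})$ via the greedy labelling of \eqref{eq:vertex_set}. First I would apply Theorem~\ref{thm:caratheodory} to the polytope $\mathcal{F}_\mathcal{N} \subset \mathbb{R}^\mathcal{T}$: since $|\mathcal{T}| = T$, any $u_\mathcal{N} \in \mathcal{F}_\mathcal{N}$ can be written as a convex combination $u_\mathcal{N} = \sum_{j=1}^{T+1} \lambda^j v^j$ with $v^j \in \mathcal{V}_{\mathcal{F}_\mathcal{N}}$, $\lambda^j \geq 0$, $\sum_j \lambda^j = 1$. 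Concretely, such a decomposition can be found algorithmically: triangulate $\mathcal{F}_\mathcal{N}$, locate the simplex containing $u_\mathcal{N}$, and read off the barycentric coordinates; or solve a small LP. (Only the existence is needed for the statement.)

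Next, for each of the $T+1$ vertices $v^j$ appearing in this decomposition, I would invoke Corollary~\ref{thm:fukuda}: because $\mathcal{F}_\mathcal{N} = \sum_{i \in \mathcal{N}} \mathcal{F}_i$ is the Minkowski sum of g-polymatroids (Theorem~\ref{thm:agg_flex_set} together with Theorem~\ref{lem:individual_flexibility_sets_g_polymatroid}), each vertex $v^j$ corresponds to some permutation $\pi_j \in \mathrm{Sym}(\Tilde{\mathcal{T}})$, and moreover $v^j = \sum_{i=1}^N v_i^{\pi_j}$, where $v_i^{\pi_j}$ is the unique vertex of $\mathcal{F}_i$ selected by the greedy algorithm run with the ordering $\pi_j$. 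Collecting these permutations gives the set $\Pi = \{\pi_1,\dots,\pi_{T+1}\}$. I would then \emph{define} $u_i := \sum_{j=1}^{T+1} \lambda^j v_i^{\pi_j}$ for each device $i \in \mathcal{N}$, and verify the two claimed properties. Feasibility, $u_i \in \mathcal{F}_i$, is immediate: $u_i$ is a convex combination (same weights $\lambda^j$, which sum to one and are non-negative) of vertices $v_i^{\pi_j}$ of the convex set $\mathcal{F}_i$. Correctness of the aggregate, $u_\mathcal{N} = \sum_{i=1}^N u_i$, follows by interchanging the two finite sums: $\sum_{i=1}^N u_i = \sum_{i=1}^N \sum_{j=1}^{T+1} \lambda^j v_i^{\pi_j} = \sum_{j=1}^{T+1} \lambda^j \sum_{i=1}^N v_i^{\pi_j} = \sum_{j=1}^{T+1} \lambda^j v^j = u_\mathcal{N}$, where the third equality is exactly Corollary~\ref{thm:fukuda} applied to each $v^j$.

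The main obstacle — and really the only delicate point — is ensuring that the permutation labelling of vertices is \emph{consistent across all summands simultaneously}: that is, that the \emph{same} permutation $\pi_j$ which selects the vertex $v^j$ of the aggregate also selects, in each $\mathcal{F}_i$, a vertex $v_i^{\pi_j}$ whose sum over $i$ returns $v^j$. This is precisely the content of Corollary~\ref{thm:fukuda}, which rests on the fact that all the polytopes involved share a common normal fan (the braid fan, as noted in Remark~\ref{rem:genpolyedges} and the discussion following Theorem~\ref{thm:greedy_alg}): a linear functional generic within a given maximal cone of the braid fan is maximized at $v^j$ over $\mathcal{F}_\mathcal{N}$ and at $v_i^{\pi_j}$ over each $\mathcal{F}_i$, and maximizers of a linear functional over a Minkowski sum add. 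One subtlety to address is that the greedy labelling $\pi \mapsto v^\pi$ is only a \emph{surjection}, not a bijection (distinct permutations tied on the cost vector may yield the same vertex); this causes no difficulty, since we merely need to \emph{choose} one valid $\pi_j$ per vertex $v^j$, and any such choice makes the decomposition $v^j = \sum_i v_i^{\pi_j}$ hold. A second, purely cosmetic point is that Carathéodory may return fewer than $T+1$ vertices; we pad $\Pi$ with arbitrary repeated permutations carrying weight $\lambda^j = 0$ so that the index set has exactly $T+1$ elements as stated. With these observations the proof is complete.
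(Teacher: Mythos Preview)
Your proposal is correct and follows essentially the same route as the paper's own proof: apply Carath\'eodory's theorem to write $u_\mathcal{N}$ as a convex combination of at most $T+1$ vertices of $\mathcal{F}_\mathcal{N}$, use Corollary~\ref{thm:fukuda} to split each such vertex $v^{\pi_j}$ as $\sum_i v_i^{\pi_j}$, define $u_i := \sum_j \lambda^j v_i^{\pi_j}$, and conclude by interchanging the finite sums and noting that $u_i$ is a convex combination of vertices of $\mathcal{F}_i$. Your additional remarks on the surjectivity of the labelling and on padding with zero-weight permutations are valid refinements that the paper leaves implicit.
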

\noindent
The existence of $\lambda$ and $\Pi$ follows from Theorem \ref{thm:caratheodory} and the decomposition among the $\mathcal{F}_i$ follows from Corollary \ref{thm:fukuda}, a formal proof is presented in the Appendix.
Solving the optimization problems from \eqref{eq:general_prob} described in the previous section will give an optimal solution in the form $u_{\mathcal{N}}^* = \sum_{j=1}^{T+1} \lambda^j v^{\pi_j}$. Using $\lambda$ and $\Pi$ we can immediately apply Theorem~\ref{thm:disaggregation} for the disaggregation process.

\section{Numerical Results}\label{sec:numerical_results}

In this section, we present numerical results to evaluate the performance of the proposed aggregation methods. We first compare their execution times against established benchmarks, and then provide a case study illustrating their practical application. The results confirm that our methods consistently outperform comparable approaches reported in the literature.

\subsubsection{Benchmarking}
\begin{figure}[t]
    \centering
    \includegraphics[width=\columnwidth]{./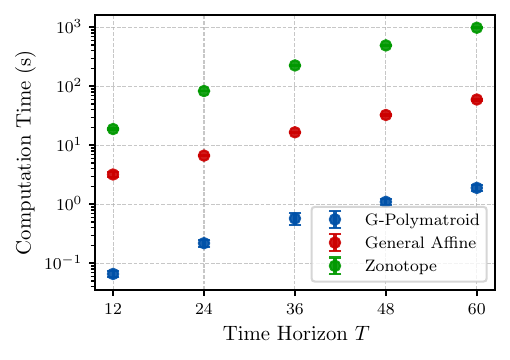}
    \caption{Computation time of various approximation methods to  solve an LP over the aggregate flexibility set of 100 DERs.}
    \label{fig:benchmarking}
  \end{figure}
We first compare the computation time of the proposed aggregation methods against existing benchmarks, considering the \textit{general affine} \cite{Taha2024AnPopulations} and \textit{zonotope-based} \cite{Muller2019AggregationResources} aggregation methods. 
We generate populations of 100 DERs and measure the time each method takes to solve the following LP:
\begin{equation}\label{prob:cost_min}
\begin{aligned}
    \textrm{minimize} \;\; &c^T u \\
    \textrm{s.t.} \;\; &u \in \mathcal{F}(\Xi_{\mathcal{N}}), \quad 
                             Cu \leq d.
\end{aligned}
\end{equation}
where $c\in \mathbb{R}^\mathcal{T}$ denotes the cost vectors of energy over the time horizon and encode network coupling constraints.
This process is repeated for various time horizon lengths, where for each horizon, we generate 100 random instances by independently sampling new populations, network constraints, and energy price profiles. 
The results are plotted in Figure~\ref{fig:benchmarking}, which demonstrate that the proposed methods achieve substantially faster computation times compared to existing approaches in the literature. These findings highlight the scalability and efficiency of the proposed framework, particularly as the problem size and time horizon increase. It is also important to note that the methods from the literature are based on inner approximations of the feasible set, and thus the solutions they produce may be sub-optimal, whereas the proposed methods retain optimality guarantees.

\subsubsection{Case Study}
Finally, we present a case study to illustrate how the proposed aggregation methods can be applied in practice. We consider a population of 50 EVs, half of which have discharging capabilities, and 100 households, each with a load, distributed generation and an ESS. Note that we can aggregate and optimize over larger populations, however the computational limits of the benchmarks we compare against restrict the comparison to the selected population size.
We consider the electricity costs minimization problem from \eqref{prob:cost_min}.
We assess the cost reduction relative to a baseline consumption profile for the population. The baseline consumption profile is defined as follows: each household minimizes their external energy consumption, i.e. minimize the $l_1$ norm of aggregate consumption of the load, distributed generation and the ESS, and the EVs adopt the consumption profile that charges them as soon as possible.
Electricity prices are sampled for each day of November 2022, for the GB system from Elexon \cite{ElexonPortal}, and we sample a new population of devices for each day. 
We compare the performance of the proposed method against the benchmarks introduced in the previous subsection.
In Fig. \ref{fig:case_study} we plot the cumulative cost for the electricity cost minimization problem over the 30 days, for the benchmarks and the baseline cost. Our methods achieve a significant relative cost reduction, compared to the benchmarks for the general affine and zonotope methods.

\begin{figure}[t]
    \centering
    \includegraphics[width=\columnwidth]{./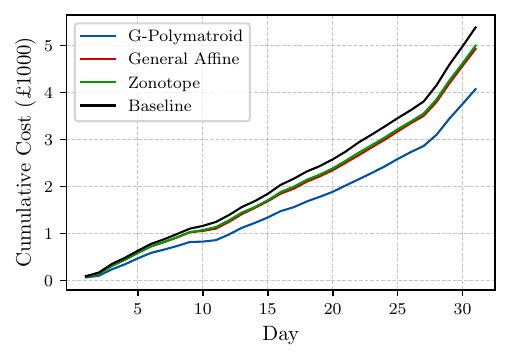}
    \caption{Cumulative electricity cost required to meet the population’s energy demand across the entire month.}
    \label{fig:case_study}
\end{figure}

\section{Conclusions}\label{sec:conc}
This paper has proposed a novel method for characterizing the aggregate flexibility of distributed energy resource populations using g-polymatroids. 
We demonstrated that, under certain assumption, the flexibility sets of individual DERs can represented as g-polymatroids, and derived the corresponding super- and submodular functions. 
Using properties of this class of polytopes we demonstrated that their aggregate flexibility can be computed efficiently.
These representations are exact providing a significant advancement over existing methods, which primarily rely on inner or outer approximations.
Due to its exactness, the proposed approach ensures both optimality and feasibility in scheduling DER flexibility.
Furthermore, we showed how the super- and submodular functions representing the aggregate flexility of a population of EVs can be simplified thereby speeding up computations.
Finally, we developed efficient optimization methods over the aggregate flexibility sets and introduced a tractable disaggregation scheme.
Our computational results confirm that this approach is viable for large-scale DER aggregations.

Future research directions include extending this framework to incorporate stochastic elements, such as uncertainties in the operation constraints of the DERs.
Additionally, incorporating network constraints into the aggregation scheme could further refine its applicability. Integrating this approach with real-time control strategies may also enhance its practical implementation in power system operations. By leveraging the properties of g-polymatroids, this work provides a theoretical framework for more efficient DER coordination in power systems.

\bibliographystyle{IEEEtran}
\bibliography{references}

\appendix
\subsection{Proof of Theorem \ref{lem:individual_flexibility_sets_g_polymatroid}}
\begin{proof}
We will show by induction on \(s=0,1,\dots,T\) that 
\begin{equation*}
    \mathcal{G}_s(\xi) := \left\{ u \in \mathbb{R}^\mathcal{T} \; \middle\vert \;
    \begin{array}{@{}cl}
        \underline{u}(t) \leq u(t) \;\; \leq \overline{u}(t) \;\; \forall t \in \mathcal{T}\\
        \; \underline{x}(t) \leq u([t]) \leq \overline{x}(t) \;\; \forall t \in [s]
    \end{array} 
    \right\}
\end{equation*}
is a g‐polymatroid \(\mathcal Q(p_s,b_s)\). Clearly by construction $\mathcal{F}(\xi) =  \mathcal{G}_T(\xi)$. 
We define the following set:
\begin{equation*}
    \mathcal{K}_{s+1}(\xi):= \left\{ u\in\mathbb{R}^\mathcal{T} \mid \underline{x}(s+1) \leq u([s+1]) \leq \overline{x}(s+1) \right\}.
\end{equation*}
The set $ \mathcal{G}_{s+1}(\xi)$ is then given by the intersection $\mathcal{G}_{s+1}(\xi) =  \mathcal{G}_s(\xi)  \cap \mathcal{K}_{s+1}(\xi)$.
We let $S \subseteq \mathcal{T}$ denote the set $S:=\{1,...,s+1\}$ and $S'$ denote its complement $S' := \mathcal{T} \setminus S$.
As $\mathcal{K}_{s+1}(\xi)$ only constrains the first $s+1$ elements of $u$, $\mathcal{K}_{s+1}(\xi)$ can be written as the Cartesian product:
\begin{equation}\label{eq:K_s+1_decomposition}
    \mathcal{K}_{s+1}(\xi) = \mathcal{K}^S_{s+1}(\xi) \times \mathbb{R}^{S'}
\end{equation}
where $\mathcal{K}^S_{s+1}(\xi) \subset \mathbb{R}^S$ is a \textit{plank}, defined as \cite[14.1]{Frank2011ConnectionsOptimization}:
\begin{equation*}
    \mathcal{K}^S_{s+1}(\xi) := \left\{ u\in\mathbb{R}^S \mid \underline{x}(s+1) \leq u([s+1]) \leq \overline{x}(s+1) \right\}.
\end{equation*}
Given a set function \( f: 2^{\mathcal{T}} \to \mathbb{R} \) and a subset \( S \subseteq \mathcal{T} \), we define the restriction of \( f \) to \( S \), denoted \( f^S \), as the function \( f^S: 2^{S} \to \mathbb{R} \) such that \( f^S(A) = f(A) \) for all \( A \subseteq S \).  
Moreover, for any subsets \( A, S \subseteq \mathcal{T} \), we define \( A_S := A \cap S \).

As the inductive hypothesis, we assume that $ \mathcal{G}_s(\xi)$ is a g-polymatroid $\mathcal{Q}(p_s, b_s)$, where the super- and submodular functions that generate it are separable amongst the disjoint subsets $S, S' \subseteq \mathcal{T}$. That is, they can be written as 
\begin{subequations}\label{eq:inductive_hype}
    \begin{equation}
        p_s(A) = p^S_s(A_S) + p^{S'}_s(A_{S'})
    \end{equation}
    \begin{equation} 
        b_s(A) = b^S_s(A_S) + b^{S'}_s(A_{S'}).
    \end{equation}
\end{subequations}
We also assume $p^{S'}_s$ and $ b^{S'}_s$ are the modular functions:
\begin{equation}\label{eq:modular_functions}
        p^{S}_s(A_{S'}) = \underline{u}(A_{S'}) \quad \text{and} \quad b^{S'}_s(A_{S'}) = \overline{u}(A_{S'}).
\end{equation}
Note by definition $ \mathcal{G}_0(\xi)$ is the g-polymatroid $\mathcal{Q}(p_0, b_0)$, where $p_0(A) = \underline{u}(A)$ and $b_0(A) = \overline{u}(A)$, which satisfies this hypothesis.
As $p_s$ and $b_s$ are separable over $S$ and $S'$, $\mathcal{G}_s(\xi)$ can be written as the Cartesian product:
$\mathcal{G}_s(\xi) = \mathcal{Q}(p^S_s,b^S_s) \times \mathcal{Q}(p^{S'}_s,b^{S'}_s)$.
Taking the intersection of $ \mathcal{G}_s(\xi)$ and $\mathcal{K}_{s+1}(\xi)$ using this and the decomposition in \eqref{eq:K_s+1_decomposition} we get
\begin{align*}
     \mathcal{G}_{s+1}(\xi)  &=   \mathcal{G}_s(\xi)  \cap \mathcal{K}_{s+1}(\xi)\\
                                &= \mathcal{Q}(p^S_s,b^S_s) \times \mathcal{Q}(p^{S'}_s,b^{S'}_s) \cap \mathcal{K}^S_{s+1}(\xi) \times \mathbb{R}^{S'}\\
                            &= \mathcal{Q}(p^S_s,b^S_s) \cap \mathcal{K}^S_{s+1}(\xi) \times\mathcal{Q}(p^{S'}_s,b^{S'}_s)  \cap \mathbb{R}^{S'}.
\end{align*}
The intersection of $\mathcal{Q}(p^{S'}_s,b^{S'}_s)  \cap \mathbb{R}^{S'}$ is trivially $\mathcal{Q}(p^{S'}_s,b^{S'}_s)$.
The intersection of $\mathcal{Q}(p^S_s,b^S_s)$ and the plank $\mathcal{K}^S_{s+1}(\xi)$ is the g-polymatroid $ \mathcal{Q}(p^S_{s+1},b^S_{s+1}) \subset \mathbb{R}^S$, where $p^S_{s+1}$ and $b^S_{s+1}$ are given by \cite[Theorem 14.2.14]{Frank2011ConnectionsOptimization}:
\begin{subequations}\label{eq:intersection_plank}
    \begin{equation}
                p^S_{s+1}(A_S) = \max\{p^S_s(A_S), \;\underline{x}(s+1) - b^S_s(A'_S)\}
    \end{equation}
                \begin{equation}
                b^S_{s+1}(A_S) = \min\{b^S_s(A_S), \;\overline{x}(s+1) - p^S_s(A'_S)\},
    \end{equation}
\end{subequations}
    where $A'_S = S \setminus A_S$.
Taking the Cartesian product of the two g-polymatroids we get the g-polymatroid 
\begin{align*}
     \mathcal{G}_{s+1}(\xi) &= \mathcal{Q}(p^S_{s+1},b^S_{s+1}) \times \mathcal{Q}(p^{S'}_s,b^{S'}_s)\\
    &= \mathcal{Q}(p_{s+1},b_{s+1}),
\end{align*}
where, using the assumption that $p^{S'}_s(A_{S'})$ and $b^{S'}_s(A_{S'})$ are the modular functions from \eqref{eq:modular_functions}, we have:
\begin{align*}
    p_{s+1}(A) &=  p^S_{s+1}(A_S) + \underline{u}(A_{S'}) \\
    b_{s+1}(A) &=  b^S_{s+1}(A_S) + \overline{u}(A_{S'}).
\end{align*}
Note, \( p_{s+1}(A) \) and \( b_{s+1}(A) \) are separable over the disjoint subsets \( S_+ := \{1, \ldots, s+2\} \) and \( S_+' := \mathcal{T} \setminus S_+ \). Moreover, \( p^{S_+'}_{s+1} \) and \( b^{S_+'}_{s+1} \) are precisely the modular functions \( \underline{u}(A_{S_+'}) \) and \( \overline{u}(A_{S_+'}) \), respectively.
Hence, if \eqref{eq:inductive_hype} and \eqref{eq:modular_functions} hold for $s$, they hold for $s+1$, thus completing the inductive step.

Finally, writing $p_{s+1}$ and $b_{s+1}$ out in full we get:
\begin{subequations}
    \begin{IEEEeqnarray}{rCl}
        p_{s+1}(A) & = & \max\Bigl\{ p_s(A \cap S), \;\;\underline{x}(s+1) - b_s(A'\cap S)\Bigr\}
    \nonumber\\
    & & \hphantom{\max\Bigl\{} + \underline{u}(A \cap S')
    \\[2pt]
    b_{s+1}(A) & = & \min\Bigl\{ b_s(A \cap S), \;\;\overline{x}(s+1) - p_s(A'\cap S)\Bigr\}
    \nonumber\\
    & & \hphantom{\min\Bigl\{} + \overline{u}(A \cap S')
    \end{IEEEeqnarray}
    \end{subequations}
as required.
\end{proof}

\subsection{Proof of Theorem \ref{thm:disaggregation}}
\begin{proof}
    By Theorem \ref{thm:caratheodory}, for all $u_\mathcal{N} \in \mathcal{F}_\mathcal{N}$, there exist $\lambda \in \mathbb{R}^{T + 1}$ and $\Pi = \{\pi_1, ..., \pi_{T+1}\} \subset \mathrm{Sym}(\Tilde{\mathcal{T}})$, such that $
        u_\mathcal{N} = \sum_{j=1}^{|\Tilde{\mathcal{T}}|} \lambda^j v_\mathcal{N}^{\pi_j}$,
    where $\sum_j^{T+1}\lambda^j = 1$, $\lambda^j > 0 \; \forall \;j$, and  $v_\mathcal{N}^{\pi_j}$ are vertices of $\mathcal{F}_\mathcal{N}$.
    Corollary \ref{thm:fukuda} allows us to write $ v_\mathcal{N}^{\pi_j}$ as a decomposition of the vertices of the $\mathcal{F}_i$, that is $v_\mathcal{N}^{\pi_j} = \sum_{i\in \mathcal{N}} v_i^{\pi_j} \;\;\forall \pi_j \in \Pi$  where $v_i^{\pi_j}$ are vertices of $\mathcal{F}_i$. Therefore, we can rwrite $u_\mathcal{N}$ as
    \begin{equation*}
        u_\mathcal{N} = \sum_{j=1}^{T+1} \lambda^j \sum_{i\in \mathcal{N}} v_i^{\pi_j} = \sum_{i\in \mathcal{N}} \sum_{j=1}^{T+1} \lambda^j v_i^{\pi_j} = \sum_{i\in \mathcal{N}} u_i.
    \end{equation*}
     where we define $u_i := \sum_{j=1}^{T+1} \lambda^j v_i^{\pi_j}$. By definition, $u_i$ is a convex combination of the vertices of $\mathcal{F}_i$ hence $u_i \in \mathcal{F}_i$, completing our proof.
\end{proof}

\vfill
\end{document}